\def\final{1}
\def\sicompversion{0}
	\definecolor{DarkGreen}{rgb}{0.1,0.5,0.1}
	\definecolor{DarkRed}{rgb}{0.5,0.1,0.1}
	\definecolor{DarkBlue}{rgb}{0.1,0.1,0.5}
\newcommand{\mynote}[1]{\marginpar{\tiny\sf #1}}
\newcommand{\mynote}[1]{}
\newcommand{\jnote}[1]{\mynote{Jon: {#1}}}
\newcommand\N{\mathbb{N}}
\newcommand\R{\mathbb{R}}
\newcommand\cA{\mathcal{A}}
\newcommand\cE{\mathcal{E}}
\newcommand\cQ{\mathcal{Q}}
\newcommand\cR{\mathcal{R}}
\newcommand\poly{\mathrm{poly}}
\newcommand\bits{\{0,1\}}
\newcommand{\bitslen}[1]{\bits^{#1}}
\newcommand{\getsr}{\gets_{\mbox{\tiny R}}}
\newcommand\set[1]{\left\{#1\right\}} 
\newcommand\eps{\varepsilon}
\newcommand{\from}{\colon}
\renewcommand{\time}{T}
\newcommand\san{\mathcal{M}}
\newcommand{\db}{D}
\newcommand{\cols}{d}
\newcommand{\rows}{n}
\newcommand{\dbsetsize}[2]{(\bitslen{#1})^{#2}}
\newcommand{\dbset}{\dbsetsize{\cols}{\rows}}
\newcommand{\row}{x}
\newcommand{\rowi}[1]{\row^{(#1)}}
\newcommand{\rowseq}{\rowi{1}, \dots, \rowi{\rows}}
\newcommand{\univ}{\bitslen{\cols}}
\newcommand{\adv}{\cA}
\newcommand{\gadv}[1]{\adv_{\mathsf{#1}}}
\newcommand{\query}{q}
\newcommand{\queryj}[1]{\query_{#1}}
\newcommand{\queryset}{\cQ}
\newcommand{\querysetd}[1]{\queryset^{(#1)}}
\newcommand{\queries}{k}
\newcommand{\answer}{a}
\newcommand{\answerj}[1]{a_{#1}}
\newcommand{\queryseq}{\queryj{1}, \dots, \queryj{\queries}}
\newcommand{\answerseq}{\answerj{1}, \dots, \answerj{\queries}}
\newcommand{\ttanswerseq}{\answerj{1}, \dots, \answerj{\tracequeries}}
\newcommand{\acc}{\alpha}
\newcommand{\accfail}{\beta}
\newcommand{\decbit}{\widehat{b}}
\newcommand{\decbitj}[1]{\decbit_{#1}}
\newcommand{\decbitseq}{\decbitj{1}, \dots, \decbitj{\tracequeries}}
\newcommand{\param}{\kappa}
\newcommand{\users}{\rows}
\newcommand{\gen}{\mathsf{Gen}}
\newcommand{\ttgen}{\gen_{\mathsf{TT}}}
\newcommand{\enc}{\mathsf{Enc}}
\newcommand{\ttenc}{\enc_{\mathsf{TT}}}
\newcommand{\dec}{\mathsf{Dec}}
\newcommand{\ttdec}{\dec_{\mathsf{TT}}}
\newcommand{\malenc}{\mathsf{TrEnc_{TT}}}
\newcommand{\security}{\eps}
\newcommand{\encsecurity}{\security_{\mathsf{Enc}}}
\newcommand{\trace}{\mathsf{Trace_{TT}}}
\newcommand{\tracep}{\mathsf{Trace_{TT}^{\pirate}}}
\newcommand{\userkey}{sk}
\newcommand{\userkeyi}[1]{\userkey^{(#1)}}
\newcommand{\userkeys}{\userkeyi{1}, \dots, \userkeyi{\users}}
\newcommand{\userkeyvec}{\vec{\userkey}}
\newcommand{\ctext}{c}
\newcommand{\ctexti}[1]{\ctext^{(#1)}}
\newcommand{\ctextj}[1]{\ctext_{#1}}
\newcommand{\ctextij}[2]{\ctext_{#2}^{(#1)}}
\newcommand{\ctextseq}{\ctextj{1}, \dots, \ctextj{\tracequeries}}
\newcommand{\ctexts}{\mathcal{C}}
\newcommand{\pirate}{\mathcal{P}}
\newcommand{\tracequeries}{\queries_{\mathsf{TT}}}
\newcommand{\encqueries}{\queries_{\mathsf{Enc}}}
\newcommand{\encadv}{\gadv{Enc}}
\newcommand{\enctime}{\time_{\mathsf{Enc}}}
\renewcommand{\stretch}{s_{\mathsf{PRG}}}
\newcommand{\prg}{\mathsf{G}}
\newcommand{\prgsecurity}{\security_{\mathsf{PRG}}}
\newcommand{\prgadv}{\gadv{PRG}}
\newcommand{\decfns}{\cQ}
\newcommand{\decfn}{q}
\newcommand{\word}{w}
\newcommand{\wordi}[1]{\word^{(#1)}}
\newcommand{\wordij}[2]{\word_{#2}^{(#1)}}
\newcommand{\fword}{w'}
\newcommand{\fwordj}[1]{\fword_{#1}}
\newcommand{\wordbook}{W}
\newcommand{\compbook}{\widetilde{W}}
\newcommand{\feasible}{F}
\newcommand{\fpgen}{\gen_{\mathsf{FP}}}
\newcommand{\fptrace}{\mathsf{Trace_{FP}}}
\newcommand{\fpsecurity}{\security_{\mathsf{FP}}}
\newcommand{\wordlength}{\ell_{\mathsf{FP}}}
\newcommand{\fpadv}{\gadv{FP}}
\newcommand{\schemes}[1]{\Pi_{\mathsf{#1}}}
\newcommand{\ttscheme}{\schemes{TT}}
\newcommand{\encscheme}{\schemes{Enc}}
\renewcommand{\time}{\mathrm{T}}
\newcommand{\enckey}{\overline{\userkey}}
\newcommand{\enckeyi}[1]{\enckey^{(#1)}}
\newcommand{\fpscheme}{\schemes{FP}}
\newcommand{\crit}{\mathrm{Crit}}
\renewcommand{\tilde}{\widetilde}
\DeclareMathOperator*{\Probability}{\mathrm{Pr}}
\newcommand{\prob}[1]{\mathrm{Pr}\left[#1\right]}
\newcommand{\Prob}[2]{\Probability_{#1}\left[#2\right]}
	\newtheorem{claim}[theorem]{Claim}
	\newtheorem{theorem}{Theorem}[section]
	\newtheorem{lemma}[theorem]{Lemma}
	\newtheorem{claim}[theorem]{Claim}
	\newtheorem{corollary}[theorem]{Corollary}
	\theoremstyle{definition}
	\newtheorem{definition}[theorem]{Definition}
	\title{Answering \lowercase{$n^{2+o(1)}$} Counting Queries \\ with Differential Privacy is Hard\thanks{A 10 page extended abstract of this work appeared in the 45th Annual Symposium on Theory of Computing.}}
	\author{Jonathan Ullman\thanks{School of Engineering and Applied Sciences.  Harvard University, Cambridge, MA.  This work was supposed by NSF grants CNS-0831289, CNS-1237135, and a gift from Google, Inc.  Email: jullman@seas.harvard.edu}}
	\title{Answering $n^{2 + o(1)}$ Counting Queries \\ with Differential Privacy is Hard \ifnum\final=0 \\
	{\small \sc Working Draft: Please Do Not Distribute}\fi}
	\author{Jonathan Ullman\thanks{Supported by NSF grants CNS-0831289, CNS-1237235, and a gift from Google, Inc.} \\ \\ School of Engineering and Applied Sciences \\ Harvard University, Cambridge, MA \\ \href{mailto:jullman@seas.harvard.edu}{jullman@seas.harvard.edu}}
\begin{document}

\maketitle

\begin{abstract}
A central problem in differentially private data analysis is how to design efficient algorithms capable of answering large numbers of \emph{counting queries} on a sensitive database.  Counting queries are of the form ``What fraction of individual records in the database satisfy the property $q$?''  We prove that if one-way functions exist, then there is no algorithm that takes as input a database $\db \in \dbset$, and $k = \tilde{\Theta}(n^2)$ arbitrary efficiently computable counting queries, runs in time $\poly(d, n)$, and returns an approximate answer to each query, while satisfying differential privacy.  We also consider the complexity of answering ``simple'' counting queries, and make some progress in this direction by showing that the above result holds even when we require that the queries are computable by constant-depth ($AC^0$) circuits.

Our result is almost tight because it is known that $\tilde{\Omega}(n^2)$ counting queries can be answered efficiently while satisfying differential privacy.  Moreover, many more than $n^2$ queries (even exponential in $n$) can be answered in exponential time.

We prove our results by extending the connection between differentially private query release and cryptographic traitor-tracing schemes to the setting where the queries are given to the sanitizer as input, and by constructing a traitor-tracing scheme that is secure in this setting.
\end{abstract}

\ifnum\sicompversion=1
\begin{keywords}
Differential Privacy, Traitor-Tracing, Cryptography
\end{keywords}

\begin{AMS}

\end{AMS}

\pagestyle{myheadings}
\thispagestyle{plain}
\markboth{Answering Many Counting Queries Privately is Hard}{J. Ullman}
\fi

\section{Introduction}
Consider a database $\db \in (\bits^d)^n$, in which each of the $n$ rows corresponds to an individual's record, and each record consists of $d$ binary attributes.  The goal of privacy-preserving data analysis is to enable rich statistical analyses on the database while protecting the privacy of the individuals.  It is especially desirable to achieve \emph{differential privacy}~\cite{DworkMcNiSm06}, which guarantees that no individual's data has a significant influence on the information released about the database.

Some of the most basic statistics on a database are \emph{counting queries}, which are queries of the form, ``What fraction of individual records in $\db$ satisfy some property $q$?''  In particular we would like to construct differentially private \textbf{sanitizers} that, given a database $D$ and $k$ counting queries $q_1, \dots, q_k$ from a family $\cQ$, outputs an approximate answer to each of the queries.  
We would like the number of queries, $k$, to be as large as possible, and the set of feasible queries, $\cQ$, to be as general as possible.  Ideally, $\cQ$, would contain all counting queries.\footnote{It may require super-polynomial time just to evaluate an arbitrary counting query, which would rule out efficiency for reasons that have nothing to do with privacy.  For this discussion, we restrict attention to queries that are efficiently computable, so are not the bottleneck in the computation.}  Moreover, we would like the algorithm to run as efficiently as possible.

Some of the earliest work in differential privacy~\cite{DworkMcNiSm06} gave an efficient sanitizer---the so-called \emph{Laplace Mechanism}.  The Laplace Mechanism answers any set of $k$ arbitrary efficiently computable counting queries by perturbing the answers with appropriately calibrated random noise, providing good accuracy (say, within $\pm .01$ of the true answer) as long as $k \lesssim n^2$.

The ability to approximately answer $n^2$ counting queries is quite powerful, especially in settings where data is abundant and $n$ is large.  However, being limited to $n^2$ queries can be restrictive in settings where data is expensive or otherwise difficult to acquire, and $n$ is small.  It can also be restrictive when the budget of queries is shared between multiple analysts.  Fortunately, a remarkable result of Blum et al.~\cite{BlumLiRo08} (with subsequent developments in~\cite{DworkNaReRoVa09,DworkRoVa10, RothRo10, HardtRo10, GuptaRoUl12, HardtLiMc12}), showed that differentially private algorithms are not limited to $n^2$ queries.  They showed how to approximately answer arbitrary counting queries even when $k$ is \emph{exponentially larger} than $n$.  Unfortunately, their algorithm, and all subsequent algorithms capable of answering more than $n^2$ arbitrary counting queries, run in time (at least) $\poly(2^d, n, k)$.

The result of Blum et al., raises the exciting possibility of an \emph{efficient} algorithm that can privately compute approximate answers to large numbers of counting queries.  Unfortunately, Dwork et al.~\cite{DworkNaReRoVa09} gave evidence that efficient sanitizers are inherently less powerful than their computationally unbounded counterparts.  They study the problem of construcing differentially private \textbf{one-shot sanitizers} that, given a database $D$, produce a summary from which approximate answers to \emph{every} query in $\cQ$ can be computed, while both the sanitizer and the summary run in time much less than the size of $\cQ$.  Dwork et al.~constructed a family of $2^{\tilde{O}(\sqrt{n})}$ queries for which there is no efficient (time $\poly(d,n)$) one-shot sanitizer (under certain cryptographic assumptions), even though there is an inefficient (time $\poly(2^d, n, |\cQ|)$) one-shot sanitizer even if $|\cQ|$ is nearly $2^n$.  For any family $\cQ$, constructing an efficient one-shot sanitizer is one way of constructing an efficient sanitizer that answers any polynomial number of queries from $\cQ$.  Thus, hardness results for one-shot sanitizers rule out a particular way of constructing efficient sanitizers.  However, ultimately a polynomial-time analyst will only be able to ask a polynomial number of queries, and hardness results for one-shot sanitzers still leave hope that there might be an efficient sanitizer that can answer many more arbitrary counting queries than the Laplace Mechanism.

Unfortunately, we show that this is not the case---there is no efficient, differentially private algorithm that takes a database $\db \in (\bits^d)^n$, and $\tilde{\Theta}(n^2)$ arbitrary, efficiently computable counting queries as input and outputs an approximate answer to each of the queries.  One way to summarize our results is that, unless we restrict the set $\cQ$ of allowable queries, or allow exponential running time, then the Laplace Mechanism is essentially the best possible algorithm for answering counting queries with differential privacy.

\subsection{Our Results and Techniques}

As discussed above, in this paper we give new hardness results for answering counting queries while satisfying differential privacy.  To make the statement of our results more concrete, we will assume that the counting queries are given to the sanitizer as input in the form of circuits that, on input an individual record $x \in \bits^d$, decide whether or not the record $x$ satisfies the property $q$.  We say the queries are efficiently computable if the corresponding circuits are of size $\poly(d,n)$.

\begin{theorem} \label{thm:main1}
Assuming the existence of one-way functions, there exists a $k = n^{2 + o(1)}$ such that there is no algorithm that, on input a database $D \in (\bits^d)^n$ and $k$ efficiently computable counting queries, runs in time $\poly(d,n)$ and returns an approximate answer to each query to within $\pm .49$, while satisfying differential privacy.
\end{theorem}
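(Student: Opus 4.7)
The plan is to establish the theorem by extending the connection between differentially private sanitizers and cryptographic traitor-tracing schemes first exploited by Dwork et al.~\cite{DworkNaReRoVa09}. In their work the queries came from a fixed family $\cQ$ of exponential size and the sanitizer was required to produce a one-shot summary, which forced the underlying traitor-tracing scheme to admit exponentially many ciphertexts. In the present setting the queries are supplied as input, so it suffices to use a scheme that supports only $k = n^{2+o(1)}$ tracing ciphertexts---provided that the scheme's correctness and tracing are robust enough in the input-query model. The reduction is as follows: generate $n$ traitor-tracing user keys $\userkeys$, each of length $d$, set $\db = (\userkeys) \in \dbset$, and associate to each tracing ciphertext $c$ the counting query $q_c(\userkey) = \dec(\userkey, c)$. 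On an honest encryption of a bit $b$, the true answer on $\db$ is close to $b$, so a sanitizer that answers every $q_c$ to within $\pm .49$ acts as a pirate decoder. The tracing algorithm then identifies a user in $\db$, and a standard hybrid argument turns this into a violation of differential privacy.

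For this reduction to rule out efficient sanitizers for $k = n^{2+o(1)}$ efficiently computable input queries, the traitor-tracing scheme must satisfy: (i) user keys in $\bits^d$ for some $d = \poly(n)$; (ii) ciphertexts whose decryption is computable by circuits of size $\poly(d, n)$, so that the induced queries are efficiently computable; (iii) collusion-resistance against the full coalition of $n$ users; (iv) tracing using at most $n^{2+o(1)}$ ciphertexts; and (v) security based solely on one-way functions. My intended construction composes a Tardos-style collusion-resistant fingerprinting code of length $\ell = \tilde{\Theta}(n^2)$---which assigns user $i$ a codeword $\wordi{i} \in \bits^\ell$ such that any word consistent with the coalition's codewords traces back to a member---with a private-key encryption scheme built from a pseudorandom function. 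User $i$'s key packages $\wordi{i}$ together with PRF keys; the $j$-th tracing ciphertext $c_j$ is constructed so that $\dec(\userkeyi{i}, c_j) = \wordij{i}{j}$, and the pseudorandomness of the PRF hides every other coordinate of $\wordi{i}$ from anyone lacking $\userkeyi{i}$. The sanitizer's answer to $q_{c_j}$ is then close to the average $j$-th codeword bit in $\db$, and randomized rounding of its $\tilde{\Theta}(n^2)$ answers yields a word traceable by the fingerprinting code.

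The main obstacle is bridging the gap between the classical traitor-tracing game, in which the pirate receives ciphertexts one at a time and outputs a single decryption, and the privacy game, in which the sanitizer sees all $k$ ciphertexts simultaneously as input and returns real-valued approximations. Two points require care. First, fingerprinting-code soundness must be invoked against a ``pirate word'' obtained by randomized rounding of the sanitizer's real-valued outputs; I expect this to go through once the code is stated in a form tolerant of distributions over words whose marginals are accurate. Second, handing all $\tilde{\Theta}(n^2)$ ciphertexts to the sanitizer at once must not leak the codeword book in a way that lets the sanitizer evade tracing; this is handled by the pseudorandomness of the encryption, which ensures that each ciphertext's payload is pseudorandom except at the designated bit of each key's codeword, so the sanitizer learns nothing about the codeword book beyond what honest decryption with the keys in $\db$ would reveal. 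Combining these pieces yields the required hardness for $k = n^{2+o(1)}$ queries.
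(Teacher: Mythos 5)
Your plan is essentially the same as the paper's: view the sanitizer as a pirate decoder, reduce to a traitor-tracing scheme whose tracing algorithm uses a length-$\tilde{\Theta}(n^2)$ Tardos fingerprinting code, show that accuracy forces the rounded answers to be a \emph{feasible} codeword (so tracing succeeds), and show that differential privacy forces tracing to implicate a user even when that user's key is absent, contradicting the tracing security. The ingredients --- Tardos codes, per-user symmetric encryption from one-way functions, rounding of real-valued answers, and the security hybrid that the ciphertext components for a missing user can be replaced by encryptions of $0$ without detection --- all match the paper (Sections 4--5).

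A few places where the proposal is imprecise relative to the paper's actual argument. First, you write that user $i$'s key ``packages $\wordi{i}$ together with PRF keys.'' In the paper the codeword $\wordi{i}$ is \emph{not} stored in the key; the key is only $(\enckeyi{i}, i)$, and the fingerprinting code is freshly sampled by $\trace$ each time and embedded in the \emph{ciphertexts} via $\malenc$. This matters: since the ciphertexts are the circuit descriptions handed to the sanitizer, they must not leak the missing user's codeword, and that is exactly what the encryption security provides. If the codewords were literally placed in the keys and the ciphertexts only selected an index, the encryption would be doing nothing, and you should pause to ask why one-way functions would be needed at all. Second, your worry that the fingerprinting code must be ``stated in a form tolerant of distributions over words whose marginals are accurate'' is unfounded: on a critical position $j$ (all $i' \in S$ agree on bit $b_j$), the true answer on $\db$ is \emph{exactly} $b_j$, so any $.49$-accurate real answer deterministically rounds to $b_j$; thus the rounded word is feasible with probability $1-\accfail$, and ordinary (worst-case, word-level) fingerprinting soundness applies. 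The paper in fact does deterministic rounding and invokes the definition of \emph{available} pirate decoder. Third, while you do note that the sanitizer receives all ciphertexts at once, you frame the resulting difficulty purely as a leakage issue; the paper's sharper observation is that because an accurate sanitizer cannot abort or answer $\bot$, the pirate it induces is automatically ``available'' on unanimous positions regardless of what state it keeps, and fingerprinting soundness already holds against arbitrary (stateful) adversaries --- this is the precise sense in which the scheme traces stateful pirates, and it is worth stating explicitly since it is the main conceptual delta from Dwork et al.

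These are refinements rather than gaps; the decomposition and the load-bearing lemmas you identify (accuracy $\Rightarrow$ feasibility, privacy $\Rightarrow$ $\Omega(1/n)$ detection without the key, fingerprinting soundness $\Rightarrow$ $o(1/n^2)$ detection) are the paper's.
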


In particular, Theorem~\ref{thm:main1} applies to \textbf{interactive sanitizers}, which are sanitizers that receive (possibly adaptively chosen) queries one at a time.  Many positive results achieve this stronger notion of sanitization.  In particular, the Laplace mechanism is an efficient interactive sanitizer that answers $\tilde{\Omega}(n^2)$ queries and there exist interactive sanitizers that can answer nearly $2^{n}$ queries in time $\poly(2^d, n)$ per query interactively~\cite{RothRo10, HardtRo10, GuptaRoUl12}.

We also show that, the same theorem holds even for queries that are computable by unbounded-fan-in circuits of depth $6$ over the basis $\{\land, \lor, \neg\}$ (a subset of the well-studied class $AC^0$), albeit under a stronger (but still plausible) cryptographic assumption.

\begin{theorem} \label{thm:main2}
Under the assumptions described in Section~\ref{sec:decfams}, there exists a $k = n^{2 + o(1)}$ such that there is no algorithm that, on input a database $D \in (\bits^d)^n$ and $k$ efficiently computable depth-$6$ queries (circuits), runs in time $\poly(d,n)$ and returns an approximate answer to each query to within $\pm .49$, while satisfying differential privacy.
\end{theorem}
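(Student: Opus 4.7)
The plan is to follow the same traitor-tracing framework that proves Theorem~\ref{thm:main1}, but to instantiate it with a traitor-tracing scheme whose decryption circuit, viewed as a function of the user key with the ciphertext hardwired, lies in depth-$6$ $AC^0$. Concretely, in the reduction the database $\db \in \dbset$ is formed by placing $\rows$ user secret keys in its rows, and each counting query $\query_j$ corresponds to a ciphertext $\ctext_j$ via $\query_j(\row) = \dec(\row, \ctext_j)$. A differentially private sanitizer that returns approximate answers to these queries within $\pm .49$ can be converted into a ``pirate decoder'' that correctly decrypts a noticeable fraction of ciphertexts, yet the differential privacy guarantee prevents any tracing algorithm from pinning down any single row, contradicting the tracing-security of the scheme. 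So the argument is entirely structural once the right scheme exists.

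First I would verify that the reduction from traitor tracing to sanitizer hardness that underlies Theorem~\ref{thm:main1} is insensitive to the complexity of the decryption algorithm: the reduction only uses the fact that $\dec(\cdot,\ctext_j)$ is an efficiently computable counting query. Consequently, if I can exhibit a secure traitor-tracing scheme with $\users = \rows$ users in which every ``decrypt with key $\row$'' map is a depth-$6$ circuit in $\row$, and in which tracing requires only $\tracequeries = n^{2+o(1)}$ ciphertexts, then plugging it into the reduction yields Theorem~\ref{thm:main2}. The counting-query budget $\queries = n^{2+o(1)}$ will come out to match the budget needed by the tracing algorithm, exactly as in the proof of Theorem~\ref{thm:main1}.

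The main obstacle, therefore, is the construction of the depth-$6$ traitor-tracing scheme. The natural strategy is to combine two ingredients: (i) a fingerprinting code that produces $\users$ codewords and a tracing procedure using $n^{2+o(1)}$ test queries, and (ii) a ``stateless'' private-key encryption scheme whose decryption, on a per-bit basis, can be expressed as an $AC^0$ predicate in the key. A ciphertext in the traitor-tracing scheme then encrypts a single bit of the fingerprinting codeword; the decryption circuit on input $\row$ looks up the $i$-th coordinate of $\row$ (prescribed by the ciphertext) and feeds it through the shallow decryption predicate, so the whole function $\row \mapsto \dec(\row,\ctext)$ has constant depth. The challenge is ensuring CPA-type security of this primitive under a plausible assumption: this is where Theorem~\ref{thm:main2} needs the stronger cryptographic hypothesis alluded to in Section~\ref{sec:decfams}, since standard one-way functions are not known to yield encryption with $AC^0$ decryption.

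Finally, once the scheme is in place I would carefully track the depth. Converting the fingerprinting-code lookup and the shallow decryption predicate into a single circuit will produce a depth-$O(1)$ circuit; pushing negations to the inputs and collapsing adjacent $\land$/$\lor$ layers should bring the depth down to the claimed $6$. After that, the security reduction is formally identical to that of Theorem~\ref{thm:main1}: any polynomial-time differentially private sanitizer answering the $\queries = n^{2+o(1)}$ depth-$6$ queries to accuracy $\pm .49$ would yield a polynomial-time pirate that is untraceable, contradicting the security of the constructed scheme.
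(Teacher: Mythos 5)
Your proposal follows the paper's proof essentially exactly: reuse the Theorem~\ref{thm:attack} reduction, and instantiate the traitor-tracing scheme $\ttscheme$ with a Tardos fingerprinting code of length $\tilde{O}(n^2) = n^{2+o(1)}$ together with an encryption scheme whose decryption is constant-depth, which the paper obtains from a local pseudorandom generator (yielding depth-$4$ decryption via a DNF for each output bit plus a position multiplexer) and shows becomes depth-$6$ after the additional user-indexing multiplexer in $\ttdec$. The one small inversion worth fixing in your sketch is the direction of that multiplexer: in $\ttdec$, the user's index stored inside the key $\userkey = (\overline{\userkey}, i)$ selects which component $\ctexti{i}$ of the (hardwired) ciphertext to decrypt, rather than the ciphertext prescribing a coordinate of $\userkey$.
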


Theorem~\ref{thm:main2} should be contrasted with the results of Hardt, Rothblum, and Servedio~\cite{HardtRoSe12} as well as Thaler, Ullman, and Vadhan~\cite{ThalerUlVa12}, which give efficient sanitizers for answering $n^{\Omega(\sqrt{k})} \gg n^2$ monotone $k$-way conjunction queries, a much simpler class than polynomial-size depth-6 circuits.\footnote{A monotone $k$-way conjunction query on a database $D \in (\bits^d)^*$ is specified by a set of positions $S \subseteq [d]$, $|S| = k \leq d$, and asks ``What fraction of records in $D$ have a $1$ in every position in $S$?''.}  

We now describe our techniques.
\paragraph{The Connection with Traitor-Tracing}
We prove our results by building on the connection between differentially private sanitizers for counting queries and \emph{traitor-tracing schemes} discovered by Dwork et al.~\cite{DworkNaReRoVa09}.  Traitor-tracing schemes were introduced by Chor, Fiat, and Naor~\cite{ChorFiNa94} for the purpose of identifying pirates who violate copyright restrictions.  Roughly speaking, a (fully collusion-resilient) traitor-tracing scheme allows a sender to generate keys for $n$ users so that 1) the sender can broadcast encrypted messages that can be decrypted by any user, and 2) any \emph{efficient pirate decoder} capable of decrypting messages can be \emph{traced} to at least one of the users who contributed a key to it, even if an arbitrary coalition of the users combined their keys in an arbitrary efficient manner to construct the decoder.

Dwork et al.~show that the existence of traitor-tracing schemes implies hardness results for one-shot sanitizers.  Very informally, they argue as follows:  Suppose a coalition of users takes their keys and builds a database $\db \in (\bits^d)^n$ where each record contains one of their user keys.  The family $\cQ$ will contain a query $q_{c}$ for each possible ciphertext $c$.  The query $q_{c}$ asks ``What fraction of the records (user keys) in $D$ would decrypt the ciphertext $c$ to the message $1$?''  Every user can decrypt, so if the sender encrypts a message $m \in \bits$ as a ciphertext $c$, then every user will decrypt $c$ to $m$.  Thus the answer to the counting query, $q_{c}$, will be $m$.  

Suppose there were an efficient one-shot sanitizer for $\cQ$.  Then the coalition could use it to efficiently produce a summary of the database $D$ that enables one to efficiently compute an approximate answer to every query $q_{c}$, which would also allow one to efficiently decrypt the ciphertext.  Such a summary can be viewed as an efficient pirate decoder, and thus the tracing algorithm can use the summary to trace one of the users in the coalition.  However, if there is a way to identify one of the users in the database from the summary, then the summary is not differentially private.

In order to instantiate their result, they need a traitor-tracing scheme.  Since $\cQ$ contains a query for every ciphertext, the parameter to optimize is the length of the ciphertexts.  Using the fully collusion-resilient traitor-tracing scheme of Boneh, Sahai, and Waters~\cite{BonehSaWa06}, which has ciphertexts of length $\tilde{O}(\sqrt{n})$, they obtain a family of queries of size $2^{\tilde{O}(\sqrt{n})}$ for which there is no efficient one-shot sanitizer.  Dwork et al.~also discovered a partial converse---proving hardness of one-shot sanitization for a smaller family of queries requires constructing traitor-tracing schemes with shorter ciphertexts, which is a seemingly difficult open problem. 
\paragraph{Our Approach}
In our setting of sanitization (rather than one-shot sanitization, as studied by Dwork et al.~\cite{DworkNaReRoVa09}), we don't expect to answer every query in $\cQ$, only a much smaller set of queries requested by the analyst.  At first glance, this should make answering the queries much easier, and thus make it more difficult to demonstrate hardness.  However, the attacker does have the power to choose the queries  which he wants answered, and can choose queries that are most difficult to sanitize.  Our first observation is that in the traitor-tracing scenario, the tracing algorithms only query the pirate decoder on a polynomial number of ciphertexts, which are randomly chosen and depend on the particular keys that were instantiated for the scheme.  For many schemes, even $\tilde{O}(n^2)$ queries is sufficient.  Thus it would seem that the tracing algorithm could simply decide which queries it will make, give those queries as input to the sanitizer, and then use the answers to those queries to identify a user and violate differential privacy.

However, this intuition ignores an important issue.  Many traitor-tracing schemes (including~\cite{BonehSaWa06}) can only trace \emph{stateless} pirate decoders, which essentially commit to a response to each possible query (or a distribution over responses) once and for all.  For one-shot sanitizers, the private summary is necessarily stateless, and thus the result of Dwork et al.~can be instantiated with any scheme that allows tracing of stateless pirate decoders.  However, an arbitrary sanitizer might give answers that depend on the sequence of queries.  Thus, in order to prove our results, we will need a traitor-tracing scheme that can trace \emph{stateful} pirate decoders.

The problem of tracing stateful pirates is quite natural even without the implications for private data analysis.  Indeed, this problem has been studied in the literature, originally by Kiayias and Yung~\cite{KiayiasYu01}.  They considered pirates that can \emph{abort} and \emph{record history}.  However, their solution, and all others known, does not apply to our specific setting due to a certain ``watermarking assumption'' that doesn't apply when proving hardness-of-sanitization  (see discussion below).  To address this problem, we also refine the basic connection between traitor-tracing schemes and differential privacy by showing that, in many respects, fairly weak traitor-tracing schemes suffice to establish the hardness of preserving privacy.  In particular, although the pirate decoder obtained from a sanitizer may be stateful and record history, the accuracy requirement of the sanitizer means that the corresponding pirate decoder cannot abort, which will make it easier to construct a traitor-tracing scheme for these kinds of pirates.  Indeed, we construct such a scheme to establish Theorem~\ref{thm:main1}.  

The scheme also has weakened requirements in other respects, having nothing to do with the statefulness of the pirate or the tracing algorithm. These weakened requirements  allow us to reduce the complexity of the decryption, which means that the queries used by the attacker do not need to be arbitrary polynomial-size circuits, but instead can be circuits of constant depth, which allows us to establish Theorem~\ref{thm:main2}.  Another technical issue arises in that all $k$ queries must be given to the sanitizer at once, whereas tracing algorithms typically are allowed to query the pirate interactively.  However, we are able to show that the scheme we construct can be traced using one round of queries.
See Sections~\ref{sec:traitortracing} and~\ref{sec:attack} for a precise statement of the kind of traitor-tracing scheme that suffices and Section~\ref{sec:construction} for our construction.

Our construction is based on a well-known fully collusion resilient traitor-tracing scheme~\cite{ChorFiNa94}, but with a modified tracing algorithm.  The tracing algorithm uses \emph{fingerprinting codes}~\cite{BonehSh98, Tardos08}, which have been employed before in the context of traitor-tracing and content distribution, but our tracing algorithm is different from all those we are aware of.  The resulting scheme allows for tracing with a minimal number of non-adaptively chosen queries, achieves tracing without context-specific watermarking assumptions, simplifies the decryption circuit (at the expense of weakening the security parameters and functionality).  The restriction to non-aborting pirates may not be so natural in the setting of content distribution, which may explain why the scheme was not previously known.

\subsection{Related Work}

In addition to the hardness results for one-shot sanitizations~\cite{DworkNaReRoVa09}, which apply to arbitrary one-shot sanitizers, there are several hardness-of-sanitization results for restricted classes of sanitizers.  Dwork et al.~proved stronger hardness results for sanitizers whose output is a ``synthetic database''---roughly, a new database (of the same dimensions) that approximately preserves the answer to some set of queries.  Their results were extended by Ullman and Vadhan~\cite{UllmanVa11}, who showed that it is hard to generate a private synthetic database that is accurate for essentially any non-trivial family of queries, even $2$-literal conjunctions.

Gupta et al.~\cite{GuptaHaRoUl11} considered algorithms that can only access the database by making a polynomial number of ``statistical queries'' (essentially counting queries).  They showed that such algorithms cannot be a one-shot sanitizer (even ignoring privacy constraints) that approximately answers certain simple families of counting queries with high accuracy.

Finally, Dwork, Naor, and Vadhan~\cite{DworkNaVa12} gave information-theoretic lower bounds for \emph{stateless sanitizers}, which take $k$ queries as input, but whose answers to each query do not depend on the other $k-1$ input queries.  They showed that (even computationally unbounded) stateless sanitizers can answer at most $\tilde{O}(n^2)$ queries with non-trivial accuracy, while satisfying differential privacy.  The Laplace Mechanism is a stateless sanitizer that answers $\tilde{\Omega}(n^2)$ queries, and thus their result is tight in this respect.  Although their result is information theoretic, and considers a highly restricted type of sanitizer, their techniques are related to ours.  We elaborate on this connection in the appendix.

\section{Preliminaries} \label{sec:sans}

\paragraph{Differentially Private Algorithms}

Let a \emph{database} $\db \in \dbset$ be a collection of $n$ rows $(\rowseq) \in \univ$.  We say that two databases $\db, \db' \in \dbset$ are \emph{adjacent} if they differ only on a single row, and we denote this by $\db \sim \db'$.
\begin{definition}[Differential Privacy~\cite{DworkMcNiSm06}]\label{def:dp} Let $\san \from \dbset \to \cR$ be a randomized algorithm that takes a database as input (where $\rows$ and $\cols$ are varying parameters). $\san$ is \emph{$(\eps, \delta)$-differentially private} if for every two adjacent databases $\db \sim \db'$ and every subset $S \subseteq \cR$,
$$
\prob{\san(D) \in S} \leq e^{\eps} \prob{\san(D') \in S} + \delta.
$$
If $\san$ is $(\eps, \delta)$-differentially private for some functions $\eps = \eps(\rows) = O(1)$, $\delta = \delta(\rows) = o(1/\rows)$, we will drop the parameters $\eps$ and $\delta$ and say that $\san$ is \emph{differentially private}.
\end{definition}
The choice of $\eps = O(1), \delta = o(1/\rows)$ is a fairly weak setting of the privacy parameters, and most known constructions of differentially private mechanisms for various tasks give quantitatively stronger privacy guarantees.  These parameters are essentially the weakest possible, as $(\eps, \delta)$-differentially privacy is not a satisfactory privacy guarantee for $\eps = \omega(1)$ or $\delta = \Omega(1/\rows)$.  That our lower bounds apply to the parameters specified in Definition~\ref{def:dp} makes our results stronger.

\paragraph{Sanitizers for Counting Queries}

Since an algorithm that always outputs $\bot$ satisfies Definition~\ref{def:dp}, we also need to specify what it means for the sanitizer to be useful.  In this paper we study sanitizers that give accurate answers to \emph{counting queries}.  A counting query on $\bitslen{\cols}$ is defined by a predicate $\query\from \bitslen{\cols} \to \bits$.  Abusing notation, we define the evaluation of the query $\query$ on a database $\db = (\rowseq) \in \dbset$ to be
$$
\query(\db) = \frac{1}{\rows} \sum_{i=1}^{\rows} \query(\rowi{i})
$$
We will use $\querysetd{\cols}$ to denote a set of counting queries on $\bitslen{\cols}$ and $\queryset = \bigcup_{\cols \in \N} \querysetd{\cols}$.

We are interested in \emph{sanitizers} that take a sequence of queries from some set $\queryset$ as input.  Formally a sanitizer is a function $\san \from \dbset \times (\querysetd{\cols})^\queries \to \mathbb{R}^\queries$ (where, again, $n, d$, and $\queries$ are varying parameters).  Notice that we assume that $\san$ outputs $k$ real-valued answers.  Think of the $j$-th component of the output of $\san$ as an answer to the $j$-th query.  For the results in this paper, this assumption will be without loss of generality.\footnote{In certain settings, $\san(\db, \queryseq)$ is allowed to output a ``summary'' $z \in \cR$ for some range $\cR$.  In this case, we would also require that there exists an ``evaluator'' $\cE \from \cR \times \queryset \to \R$ that takes a summary and a query and returns an answer $\cE(z,\query) = \answer$ that approximates $\query(\db)$.  The extra generality is used to allow $\san$ to run in less time than the number of queries it is answering (e.g.~releasing a fixed family of queries), but this is not relevant for our range of parameters where $\queries = \tilde{O}(\rows^2)$.  Indeed, a generic sanitizer, $\san$ that outputs a summary in $\cR$ can be converted to a generic sanitizer with output in $\R^\queries$ simply by running $\san(\db, \queryseq)$ to obtain $z \in \cR$ and then computing $\answerj{1} = \cE(z, \queryj{1}), \dots, \answerj{\queries} = \cE(z, \queryj{\queries})$.  This conversion increases the running time by a factor of $k$, which is the minimum time required to read the input queries.}  Definition~\ref{def:dp} extends naturally to sanitizers by requiring that for every $\queryseq \in \queryset$, the sanitizer $\san_{\queryseq}(\cdot) = \san(\cdot, \queryseq)$ is $(\eps, \delta)$-differentially private as a function of the input database.

Now we formally define what it means for a sanitizer to give accurate answers.
\begin{definition}[Accuracy]\label{def:acc}
Let $\db$ be a database and $\queryseq$ be a set of counting queries.  A sequence of answers $\answerseq$ is \emph{$\alpha$-accurate for $\queryseq$ on $\db$} if
$$
\forall j \in [k], | \queryj{j}(\db) - \answerj{j} | \leq \acc.
$$
 Let $\queryset$ be a set of counting queries, $k \in \N$ and $\alpha, \beta \in [0,1]$ be parameters.  A sanitizer $\san$ is \emph{$(\acc, \accfail, \queryset, \queries)$-accurate} if for every database $\db \in (\bits^\cols)^{\rows}$ and every sequence of queries $\queryseq \in \querysetd{\cols}$
$$
\Prob{\textrm{$\san$'s coins}}{\san(\db, \queryseq) \textrm{ is $\acc$-accurate for $\db$ and $\queryseq$}} \geq 1- \accfail.
$$
If $\san$ is $(\acc, \accfail, \queryset, \queries)$-accurate for any (constant) $\acc < 1/2$ and $\accfail = \accfail(\rows) = o(1/n^2)$, we will drop $\acc$ and $\accfail$ and say that $\san$ is $(\queryset, \queries)$-accurate.
\end{definition}
The choice of $\acc < 1/2, \accfail = o(1/\rows^2)$ is also significantly weaker than what can be achieved by known constructions of sanitizers.  These parameters are also essentially the weakest parameters possible, as a mechanism that answers $1/2$ to every query achieves $\acc = 1/2$, $\accfail = 0$ for any number of arbitrary queries.  Also, if there is a mechanism that achieves $(\alpha, \accfail, \queryset, \queries)$-accuracy for $\accfail < 1/2$, then there is another mechanism that achieves $(\alpha, o(1/\rows^2), \queryset, \queries)$-accuracy with only an $O(\log n)$ loss in the privacy parameters and the efficiency of the mechanism.\footnote{Given a sanitizer $\san$ that answers every query accurately with probability $1/2 + \Omega(1)$, one can obtain a mechanism $\san'$ that answers every query accurately with probability $1-\beta$.  $\san'$ will run $\san$ independently $r = O(\log(1/\beta))$ times and answers each query with the median of the $r$ answers for that query.} That our lower bound applies to the parameters specified in Definition~\ref{def:acc} makes our results stronger. 

\paragraph{Efficiency of Sanitizers}
Simply, a sanitizer is efficient if it runs in time polynomial in the length of its input.  To make the statement more precise, we need to specify how the queries are given to the sanitizer as input.  

Notice that to specify an arbitrary counting query $\query \from \bitslen{\cols} \to \bits$ requires $2^\cols$ bits.  In this case, a sanitizer whose running time is polynomial in the time required to specify the query is not especially efficient.  Thus, we restrict attention to queries that are efficiently computable, and have a succinct representation.  In this work, we will fix the representation to be Boolean circuits over the basis $\set{\land, \lor, \neg}$ with possibly unbounded-fan-in.  In this representation, any query can be evaluated in time $|\query|$, where $|\cdot|$ denotes the size of the circuit computing $\query$.  We also want to consider the case where the queries are computable by circuits of low depth.  For a constant $h \in \N$, we use $\querysetd{\cols}_{\mathsf{depth-}h}$ to denote the set of all counting queries on $\bitslen{\cols}$ specified by circuits of depth $h$.  Finally, we use $\querysetd{\cols}_{\mathsf{all}}$ to denote the set of all counting queries on $\bitslen{\cols}$.

\begin{definition} [Efficient Sanitizers]
A sanitizer $\san$ is \emph{efficient} if, on input a database $\db \in (\bits^\cols)^\rows$ and $\queries$ queries $\queryj{1}, \dots, \queryj{\queries} \in \querysetd{\cols}_{\mathsf{all}}$, $\san$ runs in time $\poly(\cols, \rows, \queries, |\queryj{1}| + \dots + |\queryj{\queries}|)$.  For every $h \in \N$, a sanitizer $\san$ is \emph{efficient for depth-$h$} queries if, on input a database $\db \in (\bits^\cols)^\rows$ and $\queries$ queries $\queryj{1}, \dots, \queryj{\queries} \in \querysetd{\cols}_{\mathsf{depth}-h}$, $\san$ runs in time $\poly(\cols, \rows, \queries, |\queryj{1}| + \dots + |\queryj{\queries}|)$.
\end{definition}

For comparison with our results, we will recall the properties of some known mechanisms, stated in our terminology and for our choice of parameters:
\begin{theorem}[Laplace Mechanism~\cite{DinurNi03, BlumDwMcNi05, DworkMcNiSm06}]
There exists a sanitizer $\san_{\mathsf{Lap}}$ that is 1) differentially private, 2) efficient, and 3) $(\querysetd{\cols}_{\mathsf{all}}, \tilde{\Omega}(n^2))$-accurate.
\end{theorem}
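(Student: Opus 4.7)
The plan is to exhibit the standard Laplace mechanism and verify each of the three required properties in turn. Concretely, on input $\db = (\rowseq) \in \dbset$ and queries $\queryseq \in \querysetd{\cols}_{\mathsf{all}}$, the sanitizer $\san_{\mathsf{Lap}}$ computes, for each $j \in [\queries]$, the true answer $\queryj{j}(\db) = \frac{1}{\rows}\sum_{i=1}^{\rows} \queryj{j}(\rowi{i})$, draws an independent noise sample $Z_j \sim \Lap(0, \sigma)$ from the Laplace distribution of scale $\sigma$ (to be chosen), and outputs $\answerj{j} = \queryj{j}(\db) + Z_j$. Efficiency is immediate: each $\queryj{j}(\db)$ is computable in time $O(\rows \cdot |\queryj{j}|)$ by evaluating the circuit on each of the $\rows$ rows, and drawing Laplace noise is polynomial-time in the precision, so the total runtime is $\poly(\cols, \rows, \queries, \sum_j |\queryj{j}|)$.

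For privacy, I would use the fact that each counting query has global sensitivity $1/\rows$ (changing one row changes the empirical average by at most $1/\rows$), so a single Laplace answer with scale $\sigma = 1/(\rows \eps_0)$ is $\eps_0$-differentially private. The main step is then to apply \emph{advanced composition} (Dwork--Rothblum--Vadhan) to $\queries$ such releases: choosing $\eps_0 = \Theta\!\bigl(\eps / \sqrt{\queries \log(1/\delta)}\bigr)$ yields overall $(\eps, \delta)$-differential privacy. With $\eps = O(1)$ and $\delta = o(1/\rows)$ as in Definition~\ref{def:dp}, this corresponds to the noise scale
\[
\sigma \;=\; \Theta\!\left(\frac{\sqrt{\queries \log(1/\delta)}}{\rows\,\eps}\right).
\]

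For accuracy, I would use the standard tail bound $\Prob{}{|Z_j| > t\sigma} \le e^{-t}$ for $Z \sim \Lap(0,\sigma)$, and take a union bound over the $\queries$ coordinates: with probability at least $1 - \accfail$, every $|Z_j| \le \sigma \log(\queries/\accfail)$. Setting this error bound to be at most a constant $\acc < 1/2$ and plugging in the expression for $\sigma$ gives the requirement
\[
\sqrt{\queries \log(1/\delta)} \cdot \log(\queries/\accfail) \;=\; O(\acc \cdot \rows).
\]
With the parameter regime $\eps = O(1)$, $\delta = o(1/\rows)$, $\accfail = o(1/\rows^2)$, the logarithmic factors are $\poly\log \rows$, so this is satisfied whenever $\queries = \tilde{\Omega}(\rows^2)$, yielding $(\querysetd{\cols}_{\mathsf{all}}, \tilde{\Omega}(\rows^2))$-accuracy.

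There is no real obstacle here; the only delicate point is choosing the right composition theorem, since naive ($\eps_0$-DP composes to $\queries \eps_0$-DP) composition would give only $\queries = \tilde{\Omega}(\rows)$. Using advanced composition to exploit the allowed slack $\delta = o(1/\rows)$ is what buys the extra factor of $\sqrt{\queries}$ and lands at the claimed $\tilde{\Omega}(\rows^2)$ threshold, matching the hardness result the rest of the paper will establish.
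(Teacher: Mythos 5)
Your proof is correct, and since the paper simply cites this theorem to prior work without supplying a proof, there is nothing in-paper to compare against line by line. Your route---Laplace noise at scale $1/(\rows\eps_0)$ on each query, then advanced composition to absorb the $\sqrt{\queries}$ factor into the $\delta$ budget, then a Laplace tail bound and union bound---is the standard modern derivation and hits the $\tilde{\Omega}(\rows^2)$ threshold cleanly. One small remark: the references the paper cites (Dinur--Nissim '03, Blum--Dwork--McSherry--Nissim '05, Dwork--McSherry--Nissim--Smith '06) predate the advanced composition theorem of Dwork--Rothblum--Vadhan '10 that you invoke. The historically earlier route to the same bound is to view the $\queries$ answers as a single vector with $\ell_2$-sensitivity $\sqrt{\queries}/\rows$ and add Gaussian noise calibrated to that (giving $(\eps,\delta)$-DP directly, as in the SuLQ framework), which yields the same per-coordinate noise magnitude $\tilde{O}(\sqrt{\queries}/\rows)$ and hence the same $\queries = \tilde{\Omega}(\rows^2)$ conclusion. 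Both arguments are correct; advanced composition is the cleaner black-box tool and is the right thing to cite today, whereas the $\ell_2$-sensitivity/Gaussian route is closer to what the listed references actually prove. Either way your final parameter bookkeeping (absorbing $\eps = O(1)$, $\delta = o(1/\rows)$, $\accfail = o(1/\rows^2)$ into polylog factors) is sound.
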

\begin{theorem}[``Advanced Query Release Mechanisms''~\cite{BlumLiRo08, DworkNaReRoVa09, DworkRoVa10, HardtRo10, GuptaRoUl12, HardtLiMc12}]
There exists a sanitizer $\san_{\mathsf{Adv}}$ that is 1) differentially private and 2) $(\querysetd{\cols}_{\mathsf{all}}, 2^{\tilde{\Omega}(n/ \sqrt{d})})$-accurate.  For queries $\queryj{1}, \dots, \queryj{\queries} \in \querysetd{\cols}_{\mathsf{all}}$, $\san_{\mathsf{Adv}}$ runs in time $\poly(2^\cols, \rows, \queries, |\queryj{1}| + \dots + |\queryj{\queries}|)$.
\end{theorem}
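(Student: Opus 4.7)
The plan is to construct $\san_{\mathsf{Adv}}$ by combining a no-regret online learner over the data universe $\univ$ with the sparse-vector (``above-threshold'') privacy technique, following the Private Multiplicative Weights template of Hardt--Rothblum and its refinements. First, I would maintain a synthetic distribution $\mathbf{p}_t$ over $\univ$, initialized to the uniform distribution; because $\mathbf{p}_t$ is stored explicitly as a vector of length $2^\cols$, the per-query running time becomes $\poly(2^\cols, \rows, \queries)$, which accounts for the $2^\cols$ factor in the stated time bound.

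For each incoming query $\queryj{j}$, I would compute the predicted answer $\hat{\answer}_j = \sum_{\row \in \univ} \mathbf{p}_t(\row) \queryj{j}(\row)$, and use the sparse-vector mechanism to privately test whether $|\hat{\answer}_j - \queryj{j}(\db)|$ exceeds a threshold of roughly $\acc/2$. On a ``below-threshold'' query, simply output $\hat{\answer}_j$ and leave $\mathbf{p}_t$ unchanged. On an ``above-threshold'' query, output a Laplace-noised version of $\queryj{j}(\db)$ and perform a multiplicative weights update that boosts the weight of records $\row \in \univ$ pushing $\queryj{j}(\mathbf{p}_t)$ toward $\queryj{j}(\db)$.

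The analysis has two ingredients. First, a relative-entropy potential argument (with $\mathbf{p}_t$ compared against the empirical distribution of $\db$) shows that the total number of above-threshold events across all $\queries$ queries is at most $T = \tilde{O}(\cols/\acc^2)$, independent of $\queries$. Second, the sparse-vector composition theorem charges privacy loss only for these above-threshold events, so the total privacy cost is governed by answering $T$ Laplace queries on a database of size $\rows$; by advanced composition this requires noise of magnitude $\tilde{O}(\sqrt{T}/\rows) = \tilde{O}(\sqrt{\cols}/(\acc \rows))$. Demanding this be at most $\acc$ and inverting yields $\queries \leq 2^{\tilde{\Omega}(\rows/\sqrt{\cols})}$ at constant accuracy, which matches the claimed bound.

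The main obstacle is the privacy analysis of the interleaved sparse-vector and Laplace releases: one must argue that conditioning on the sequence of below-/above-threshold decisions does not leak information beyond what is charged, and that the two kinds of releases compose correctly under $(\eps,\delta)$-differential privacy. These subtleties are resolved in the cited works~\cite{HardtRo10, GuptaRoUl12, HardtLiMc12}, and plugging their guarantees into the template above yields the stated theorem directly.
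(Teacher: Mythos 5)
This theorem is stated in the paper without proof; it is a citation of prior work, and your proposal correctly reconstructs one of the cited constructions---namely the Private Multiplicative Weights mechanism with sparse-vector, as in~\cite{HardtRo10, GuptaRoUl12}. The sketch is sound, but one quantitative step is glossed over: in the line ``demanding this be at most $\acc$ and inverting yields $\queries \leq 2^{\tilde{\Omega}(\rows/\sqrt{\cols})}$'' the quantity you are inverting, $\tilde{O}(\sqrt{\cols}/(\acc\rows))$, contains no visible dependence on $\queries$; the dependence actually enters because the sparse-vector threshold noise must simultaneously be $O(\acc)$ for all $\queries$ queries, which by a union bound forces the noise scale to carry a $\log(\queries/\accfail)$ factor. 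Making that $\log \queries$ term explicit, the accuracy constraint becomes $\acc^2 \gtrsim \sqrt{\cols}\,\log\queries/(\eps\rows)$, and solving for $\queries$ at constant $\acc,\eps$ gives the claimed $\queries = 2^{\tilde{\Omega}(\rows/\sqrt{\cols})}$. Aside from that omission, the argument matches the cited mechanisms.
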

As we mentioned above, these mechanisms can achieve stronger quantitative privacy and accuracy guarantees (in terms of $\eps, \delta$ for privacy and $\acc, \accfail$ for accuracy) with only a small degradation in the number of queries.  Also, notice that both of these mechanisms provide accuracy guarantees that are independent of the complexity of the queries (although the running time of the mechanism will depend on the complexity of the queries).  Our hardness results will apply to sanitizers that only provide accuracy for queries of size $\poly(\cols, \rows)$.

\section{Traitor-Tracing Schemes} \label{sec:tt}
In this section we give define a traitor-tracing scheme.
Throughout, we will use $\mathsf{A_{TT}}$ to denote algorithms associated with traitor-tracing schemes.

\subsection{Traitor-Tracing Schemes} \label{sec:traitortracing}
We now give a definition of a traitor-tracing scheme, heavily tailored to the task of proving hardness results for generic sanitizers. We will sacrifice some consistency with the standard definitions.  See below for further discussion of the ways in which our definition departs from the standard definition of traitor-tracing.  In some cases, the non-standard aspects of the definition will be necessary to establish our results, and in others it will be for convenience.  Despite these differences, we will henceforth refer to schemes satisfying our definition simply as \emph{traitor-tracing schemes}.

 Intuitively, a traitor-tracing scheme is a form of broadcast encryption, in which a sender can broadcast an encrypted message that can be decrypted by each of a large set of users. The standard notion of security for such a scheme would require that an adversary that doesn't have any of the keys cannot decrypt the message.  A traitor-tracing scheme has the additional property that given any decoder capable of decrypting the message (which must in a very loose sense ``know'' at least one of the keys), there is a procedure for determining which user's key is being used.  Moreover, we want the scheme to be ``collusion resilient,'' in that even if a coalition of users gets together and combines their keys in some way to produce a decoder, there is still a procedure that identifies at least one member of the coalition.

We now describe the syntax of a traitor-tracing scheme more formally.  For functions $\users, \tracequeries \from \N \to \N$, an $(\users, \tracequeries)$-traitor-tracing scheme is a tuple of algorithms $(\ttgen,  \ttenc,  \ttdec, \trace)$.  The parameter $n$ specifies the number of users of the scheme and the parameter $\tracequeries$ will specify the number of queries that the tracing algorithm makes to the pirate decoder.  We allow all the algorithms to be randomized except for $\ttdec$.\footnote{It would not substantially affect our results if $\ttdec$ were randomized, but we will assume that $\ttdec$ is deterministic for ease of presentation.}
\begin{itemize}
\item The algorithm $\ttgen$ takes a security parameter, $\param$, and returns a sequence of $\users = \users(\param)$ user keys $\userkeyvec = (\userkeys) \in \bits^{\param}$.  Formally, $\userkeyvec = (\userkeys) \getsr \ttgen(1^{\param})$.
\item The algorithm $ \ttenc$ takes a sequence of $\users$ user keys $\userkeyvec$ and a message bit $b \in \bits$, and generates a ciphertext $\ctext \in \ctexts = \ctexts^{(\param)}$.  Formally, $\ctext \getsr  \ttenc(\userkeyvec, b)$.
\item The algorithm $\ttdec$ takes any single user key $\userkey$ and a ciphertext $\ctext \in \ctexts$, runs in time $\poly(\param, \users(\param))$ and deterministically returns a message bit $\decbit \in \bits$. Formally $\decbit = \ttdec(\userkey, \ctext)$.
\item The algorithm $\trace$ takes as input a set of user keys $\userkeyvec \in (\bits^\param)^{\users(\param)}$ and an oracle $\pirate \from (\ctexts^{(\param)})^{\tracequeries(\param)} \to \bits^{\tracequeries(\param)}$, makes one $\tracequeries$-tuple of queries, $(\ctextseq) \in \ctexts^{(\param)}$ to its oracle ($\tracequeries = \tracequeries(\param)$), and returns the name of a user $i \in [\users(\param)]$.  Formally, $i \getsr \tracep(\userkeyvec)$.
\end{itemize}

Intuitively, think of the oracle $\pirate$ as being given some subset of keys $\userkeyvec_{S} = (\userkeyi{i})_{i \in S}$ for a non-empty set $S \subseteq [\users]$, and $\trace$ is attempting to identify a user $i \in S$.  Clearly, if $\pirate$ ignores its input and always returns $0$, $\trace$ cannot have any hope of success, so we must assume that $\pirate$ is capable of decrypting ciphertexts.

\begin{definition}[Available Pirate Decoder] \label{def:availablepirate}
Let the tuple of algorithms \linebreak $\ttscheme = (\ttgen, \ttenc, \ttdec, \trace)$ be an $(\users, \tracequeries)$-traitor-tracing scheme.  Let $\pirate$ be a (possibly randomized) algorithm.  We say that $\pirate$ is a \emph{$\tracequeries$-available pirate decoder} if for every $\param \in \N$, every set of user keys $\userkeyvec = (\userkeys) \in \bits^{\param}$, every $S \subseteq [\users]$ such that $|S| \geq \users-1$, and every $\ctextseq \in \ctexts^{(\param)}$,
\begin{equation*}
\Prob{}{(\decbitseq) \getsr \pirate(\userkeyvec_{S}, \ctextseq) \atop \exists j \in [\tracequeries], b \in \bits \left( \left( \forall i \in S, \ttdec(\userkeyi{i}, \ctextj{j}) = b\right) \land \left( \decbitj{j} \neq b \right) \right)} \leq o\left(\frac{1}{\users(\param)^2}\right).
\end{equation*}
In other words, if every user key $\userkeyi{i}$ (for $i \in S$) decrypts $\ctext$ to $1$ (resp.~$0$), then $\pirate(\userkeyvec_{S}, \cdot)$ decrypts $\ctext$ to $1$ (resp.~$0$), with high probability.
\end{definition}

We can now define a secure, $(\users, \tracequeries)$-traitor-tracing scheme:
\begin{definition}[Traitor-Tracing Scheme] \label{def:ttscheme}
Let the tuple of algorithms \linebreak $\ttscheme = (\ttgen, \ttenc, \ttdec, \trace)$ be an $(\users, \tracequeries)$-traitor-tracing scheme.  Let $\tracequeries \from \N \to \N$ be a function.  We say that $\ttscheme$ is a \emph{secure $(\users, \tracequeries)$-traitor-tracing scheme} if for every $S \subseteq [\users(\param)]$ such that $|S| \geq \users(\param) - 1$, for every (possibly randomized) algorithm $\pirate$ that 1) runs in time $\poly(\param, \users(\param), \tracequeries(\param))$ and 2) is a $\tracequeries$-available pirate decoder, we have
\begin{equation*}
\Prob{\userkeyvec \getsr \ttgen(1^\param) \atop \textrm{$\pirate$'s, $\trace$'s coins}}{\mathsf{Trace}_{\mathsf{TT}}^{\pirate(\userkeyvec_{S}, \cdot)}(\userkeyvec) \not\in S} = o\left(\frac{1}{\users(\param)}\right)
\end{equation*}
\end{definition}

\paragraph{Remarks About Our Definition of Traitor-Tracing}
The traitor-tracing schemes we consider are somewhat different than those previously studied in the literature.
\begin{itemize}
\item We do not require the encryption or tracing algorithms to use public keys.  In the typical application of traitor-tracing schemes to content distribution, these would be desirable features, however they are not necessary for proving hardness of sanitization.

\item We only require that the tracing algorithm succeeds with probability $1 - o(1/n)$.  Typically one would require that the tracing algorithm succeeds with probability $1 - n^{-\omega(1)}$.

\item We do not give the pirate decoder access to an encryption oracle.  In other words, we do not require CPA security.  Most traitor-tracing schemes in the literature are public-key, making this distinction irrelevant.  Here, we only need an encryption scheme that is secure for an \emph{a priori} bounded number of messages.

\item We allow the pirate decoder to be \emph{stateful}, but in an unusual way.  We require (roughly) that if any of the queries are ciphertexts generated by $\enc(\userkeyvec, b)$, then the pirate decoder answers $b$ to those queries, regardless of the other queries issued.  In many models, the pirate is allowed to abort, and answer $\bot$ if it detects that it is being traced.  However, we do allow our pirate to correlate its answers to different queries, subject to this accuracy constraint.  We also allow the pirate to see all the queries made by the tracer at once, which is more power than is typically given to the pirate.
\end{itemize}

Roughly, the first three modifications will allow us to find a candidate scheme with very simple decryption and the fourth modification will allow us to trace stateful pirates even in the setting of bit-encryption.
 
\subsection{Decryption Function Families} \label{sec:decryptionfamilies}

For Theorem~\ref{thm:main2}, we are interested in traitor-tracing schemes where $\ttdec$ is a ``simple'' function of the user key (for every ciphertext $\ctext \in \ctexts$).
\begin{definition}[Decryption Function Family]
Let $(\ttgen, \ttenc, \ttdec)$ be a traitor-tracing scheme where $\ttgen$ produces keys in $\bits^\param$ and $\ttenc$ produce ciphertexts in $\ctexts = \ctexts^{(\param)}$.  For every $\ctext \in \ctexts$, we define the \emph{$\ctext$-decryption function} $\decfn_{\ctext} \from \bits^\param \to \bits$ to be $\decfn_{\ctext}(\userkey) = \ttdec(\userkey, \ctext)$.  We define the \emph{decryption function family} $\decfns_{\ttdec}^{(\param)} = \set{\decfn_{\ctext}}_{\ctext \in \ctexts^{(\param)}}$.
\end{definition}

In what follows, we will say that $\ttscheme$ is an traitor-tracing scheme with decryption function family $\decfns_{\ttdec}^{(\param)}$.
 
 \section{Attacking Efficient Sanitizers} \label{sec:attack}
In this section we will prove our main result, showing that the existence of traitor-tracing schemes (as in Definition~\ref{def:ttscheme}) implies that efficient sanitizers cannot answer too many counting queries while satisfying differential privacy.
 
 \begin{theorem} [Attacking Efficient Sanitizers] \label{thm:attack}
 Assume that there exists an $(\users(\param), \tracequeries(\param))$-secure traitor-tracing scheme $\ttscheme = (\ttgen, \ttenc, \ttdec, \trace)$ with decryption function family $\queryset^{(\param)} = \decfns_{\ttdec}^{(\param)}$.  Then there does not exist any sanitizer $\san \from (\bits^\cols)^\rows \times (\queryset^{(\cols)})^{\tracequeries(\cols)} \to \R^{\tracequeries(\cols)}$ that is simultaneously 1) differentially private, 2) efficient, and 3) $(\queryset, \tracequeries(\cols))$-accurate for $\queryset = \cup_{\cols \in \N} \queryset^{(\cols)}$.
 \end{theorem}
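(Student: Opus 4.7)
The plan is to convert any such $\san$ into an efficient, $\tracequeries$-available pirate decoder for $\ttscheme$ and then play the tracing guarantee against differential privacy. Suppose for contradiction that a sanitizer $\san$ satisfying all three requirements exists. For a security parameter $\param$, set $\users = \users(\param)$ and $\tracequeries = \tracequeries(\param)$, sample $\userkeyvec = (\userkeys) \getsr \ttgen(1^\param)$, and for each $S \subseteq [\users]$ form the database $\db_S \in (\bits^\param)^\users$ whose $i$-th row is $\userkeyi{i}$ when $i \in S$ and the fixed string $0^\param$ otherwise. Define the pirate $\pirate$ so that on input $(\userkeyvec_S, (\ctextseq))$ it builds the queries $\queryj{j} = \decfn_{\ctextj{j}} \in \queryset^{(\param)}$, runs $(\answerseq) \getsr \san(\db_S, \queryseq)$, and outputs $\decbitj{j} = 1$ iff $\answerj{j} \geq 1/2$.

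The pirate $\pirate$ runs in time $\poly(\param, \users, \tracequeries)$ because $\san$ is efficient and every $\queryj{j}$ has circuit size $\poly(\param, \users)$, since $\ttdec$ runs in that time. To verify $\tracequeries$-availability, fix $S$ with $|S| \geq \users - 1$ and any $(\ctextseq)$ on which all keys in $S$ decrypt $\ctextj{j}$ to the same bit $b$; then at most one row of $\db_S$ lies outside $S$, so $|\queryj{j}(\db_S) - b| \leq 1/\users$. By $(\queryset, \tracequeries)$-accuracy, with probability $1 - o(1/\users^2)$ every answer simultaneously satisfies $|\answerj{j} - \queryj{j}(\db_S)| \leq \acc$ for a constant $\acc < 1/2$, whence $|\answerj{j} - b| < 1/2$ for $\users$ sufficiently large and the rounding returns $\decbitj{j} = b$.

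With $\pirate$ in hand, I apply the tracing guarantee to $S_{i^*} := [\users] \setminus \{i^*\}$ for each $i^* \in [\users]$ to conclude
\[
\prob{\trace^{\pirate(\userkeyvec_{S_{i^*}}, \cdot)}(\userkeyvec) = i^*} \leq \prob{\trace^{\pirate(\userkeyvec_{S_{i^*}}, \cdot)}(\userkeyvec) \notin S_{i^*}} = o(1/\users).
\]
The databases $\db_{[\users]}$ and $\db_{S_{i^*}}$ differ in exactly row $i^*$ (hence are adjacent), and the tracer's output is a post-processing of $\san$ run on one of these two databases. Conditioning on $\userkeyvec$ and invoking differential privacy of $\san$ together with post-processing yields, for every $i^*$,
\[
\prob{\trace^{\pirate(\userkeyvec_{[\users]}, \cdot)}(\userkeyvec) = i^*} \leq e^{\eps} \cdot o(1/\users) + \delta = o(1/\users),
\]
where we use $\eps = O(1)$ and $\delta = o(1/\users)$. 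Summing over the $\users$ possible values of $i^*$ gives $\sum_{i^*} \prob{\trace^{\pirate(\userkeyvec_{[\users]}, \cdot)}(\userkeyvec) = i^*} = o(1)$, which contradicts the fact that $\trace$ always returns some element of $[\users]$, so these probabilities must sum to $1$.

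The main technical point is the availability check: one must certify that rounding the sanitizer's real-valued output produces correct bits with probability $1 - o(1/\users^2)$ whenever all present keys agree, which requires that the accuracy error $\acc < 1/2$ plus the $1/\users$ deviation from the single dummy row remains strictly below $1/2$. The remaining pieces, namely the tracing failure bound, differential privacy combined with post-processing (applied pointwise in $\userkeyvec$), and the pigeonhole/summation over $[\users]$, are routine once the pirate is in place.
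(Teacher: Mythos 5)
Your proof is correct and follows the same structure as the paper's: turn the sanitizer into an available pirate decoder, use the tracing guarantee to bound the probability of naming an absent user, then play that bound against differential privacy to reach a contradiction. The only real difference is that you keep the database size fixed at $\users$ by padding with a dummy row $0^\param$ rather than (as the paper does, a little loosely) comparing an $\users$-row database to an $(\users-1)$-row one; this matches the paper's same-size adjacency definition more precisely, at the small cost of a $1/\users$ slack in the availability argument, which you correctly absorb since $\acc < 1/2$ is a fixed constant.
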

 
 In the typical setting of parameters, $\users(\param) = \poly(\param)$, $\tracequeries(\param) = \tilde{\Theta}(\users^2)$, and decryption can be implemented by circuits of size $\poly(\users) = \poly(\param)$.  Then Theroem~\ref{thm:attack} will state that there is no sanitizer $\san$ that takes a database $\db \in (\bits^\cols)^{\poly(\cols)}$, runs in $\poly(\cols)$ time, and accurately answers $\tilde{\Theta}(\rows^2)$ queries implemented by circuits of size $\poly(\cols)$, while satisfying differential privacy.  
 
The main difference between Theorem~\ref{thm:attack} and the result of Dwork et al.~\cite{DworkNaReRoVa09} is that we only assume the existence of a sanitizer for $\tracequeries(\cols)$ queries from $\queryset^{(\cols)} = \queryset_{\ttdec}^{(\cols)}$, whereas Dwork et al.~assume the existence of a one-shot sanitizer that answers every query in $\decfns^{(\cols)}$.  To offset the weaker assumption on the sanitizer, we assume that the traitor-tracing scheme is secure against certain stateful pirate decoders (as in Definition~\ref{def:availablepirate}) whereas Dwork et al.~only need to trace stateless pirates.  Theorem~\ref{thm:attack} also explicitly allows the traitor-tracing scheme to have the relaxed functionality and security properties discussed at the end of Section~\ref{sec:tt}, although it is implicit in Dwork et al.~that the relaxed properties are sufficient to prove hardness results.
\jnote{When I say ``is implicit in'', I mean that we use the same arguments but note that weaker assumptions suffice.  Does that match with the usual use of ``is implicit in''?}
 
We now sketch the proof:  Every function $q_{\ctext} \in \decfns^{(\cols)}$ is viewed as a query $q_{\ctext}(x)$ on a database row $x \in \bits^\cols$.  Assume there is an efficient sanitizer is that is $(\decfns^{(\cols)}, \tracequeries(\cols))$-accurate for these queries.  The fact that $\san$ is accurate for these queries will imply that (after small modifications) $\san$ is a $\tracequeries$-available pirate decoder (Definition~\ref{def:availablepirate}).  Here is where we differ from Dwork et al., who assume that $\san$ accurately answers \emph{all} queries in $\querysetd{\cols}$, in which case $\san$ can be viewed as a stateless pirate decoder (but must solve a harder sanitization problem).

We complete the proof as in Dwork et al.  Consider two experiments:  In the first, we construct an $\users$-row database $\db$ by running $\ttgen(1^{\cols})$ to obtain $\users$ user keys, and putting one in each row of $\db$.  Then we run $\trace$ on $\san(\db, \cdot)$ and obtain a user $i$.  Since $\san$ is useful, and $\ttscheme$ is secure, we will have that $i \in [\users]$ with probability close to $1$, and thus there is an $i^* \in [\users]$ such that $i = i^*$ with probability $\gtrsim 1/\users$.  

In the second experiment, we construct a database $\db'$ exactly as in the first, however we exclude the key $\userkeyi{i^*}$.  Since $\db$ and $\db'$ differ in only one row, differential privacy requires that $\trace$, run with oracle $\san(\db', \cdot)$, still outputs $i^*$ with probability $\Omega(1/\users)$.  However, in this experiment, $i^*$, $\userkeyi{i^*}$ is no longer given to the pirate decoder, and thus security of $\ttscheme$ says that $\trace$, run with this oracle, must output $i^*$ with probability $o(1/\users)$.  Thus, we will obtain a contradiction.
 
 \begin{proof}
Let $\ttscheme = (\ttgen, \ttenc, \ttdec, \trace)$ be the assumed traitor-tracing scheme, and assume there exists an efficient, differentially private, $(\queryset^{(\cols)}, \tracequeries(\cols))$-accurate sanitizer $\san$. 
We define the pirate decoder $\pirate_{\san}$ as follows:
\begin{algorithm}
\caption{The pirate decoder $\pirate_{\san}$}
\begin{algorithmic}
\STATE{\textbf{Input:} A set of user keys $(\userkeyvec_{S}) \in \bits^{\cols}$ and a set of ciphertexts $\ctextseq$ ($\tracequeries = \tracequeries(\cols)$).}
\STATE{Construct circuits specifying the queries $q_{\ctextj{1}}, \dots, q_{\ctextj{\tracequeries}} \in \decfns_{\ttdec, \cols}$.}
\STATE{Construct a database $\db = (\userkeyi{i})_{i \in S} \in (\bits^{\cols})^{|S|}$.}
\STATE{Let $\ttanswerseq \getsr \san(\db, q_{\ctextj{1}}, \dots, q_{\ctextj{\tracequeries}})$.}
\STATE{Round the answers $\ttanswerseq \in [0,1]$ to obtain $\decbitj{1}, \dots, \decbitj{\tracequeries} \in \bits$ (i.e. $\decbitj{j} = \lceil a_j \rfloor$)}
\STATE{\textbf{Output:}  $\decbitj{1}, \dots, \decbitj{\tracequeries}$.}
\end{algorithmic}
\end{algorithm}

Since $\san$ is efficient, its running time is $\poly(\cols, \users(\cols), \tracequeries(\cols), |q_{\ctextj{1}}| + \ldots + |q_{\ctextj{\tracequeries}}|)$, which is  $\poly(\cols, \users(\cols), \tracequeries(\cols))$.  Recall that the size of the circuits (queries) $q_{\ctext} \in \decfns_{\ttdec}^{(\cols)}$ is $\poly(\cols, \rows)$.
 In this case $\pirate_{\san}$ runs in time $\poly(\cols, \users(\cols), \tracequeries(\cols))$ as well, since constructing the queries can be done in time polynomial in their size, and the final rounding step can be done in time $\poly(\tracequeries(\cols))$.

Next, we claim that if $\san$ is accurate for $\queryset^{(\cols)}$, then $\pirate_{\san}$ is a useful pirate decoder.
\begin{claim} \label{clm:acctouseful}
If $\san$ is $(\queryset_{\ttdec}, \tracequeries)$-accurate, then $\pirate_{\san}$ is a $\tracequeries$-useful pirate decoder.
\end{claim}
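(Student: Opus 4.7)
The plan is to show directly that for every set of user keys $\userkeyvec$, every $S \subseteq [\users(\cols)]$ with $|S| \geq \users(\cols)-1$, and every ciphertexts $\ctextseq$, the bad event in Definition~\ref{def:availablepirate} occurs with probability at most $o(1/\users(\cols)^2)$ over the coins of $\pirate_{\san}$. The only source of randomness in $\pirate_{\san}$ is $\san$'s internal randomness, so it suffices to relate the accuracy guarantee of $\san$ to the availability condition.

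The key observation is that for any ciphertext $\ctext$, the query $\query_{\ctext}$ defined by the decryption function family satisfies
\[
\query_{\ctext}(\db) \;=\; \frac{1}{|S|} \sum_{i \in S} \query_{\ctext}(\userkeyi{i}) \;=\; \frac{1}{|S|} \sum_{i \in S} \ttdec(\userkeyi{i}, \ctext),
\]
simply by unrolling the definition of a counting query on $\db = (\userkeyi{i})_{i \in S}$. In particular, whenever every $\userkeyi{i}$ with $i \in S$ decrypts $\ctext$ to the same bit $b \in \bits$, we get $\query_{\ctext}(\db) = b \in \set{0,1}$.

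Next, by the accuracy hypothesis and Definition~\ref{def:acc}, $\san$ is $(\acc, \accfail, \queryset_{\ttdec}, \tracequeries)$-accurate for some constant $\acc < 1/2$ and $\accfail = o(1/|S|^2) = o(1/\users(\cols)^2)$ (using $|S| \geq \users(\cols)-1$ and the fact that the accuracy guarantee applies to databases of every size). Hence, except on an event $E$ of probability at most $\accfail$, every answer satisfies $|\answerj{j} - \query_{\ctextj{j}}(\db)| \leq \acc < 1/2$. On the complement of $E$, for each $j$ such that $\ctextj{j}$ is unanimously decrypted by $S$ to some $b \in \bits$, we have $|\answerj{j} - b| < 1/2$, so the rounding step in $\pirate_{\san}$ yields $\decbitj{j} = b$.

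Therefore the bad event of Definition~\ref{def:availablepirate} is contained in $E$ and has probability at most $o(1/\users(\cols)^2)$, which is exactly the availability bound. There is no real obstacle here; the whole content is the arithmetic observation that unanimous decryption forces the counting-query value $\query_{\ctext}(\db)$ to be an integer in $\set{0,1}$, so the slack $\acc < 1/2$ is enough to recover it by rounding. The only minor bookkeeping point is that $|S|$ may be $\users(\cols)-1$ rather than $\users(\cols)$, but this changes the failure probability only by a constant factor and is absorbed into the $o(\cdot)$.
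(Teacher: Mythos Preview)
Your proposal is correct and follows essentially the same argument as the paper's proof: both observe that unanimous decryption over $S$ forces $\query_{\ctext}(\db)\in\{0,1\}$, invoke accuracy with $\acc<1/2$ and $\accfail=o(1/|S|^2)$ to conclude that rounding recovers the correct bit except with probability $o(1/\users^2)$, and note the passage from $|S|$ to $\users$ costs only a constant factor. There is no substantive difference in approach.
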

\begin{proof} [Proof of Claim~\ref{clm:acctouseful}]
Let $\userkeyvec \in \bits^\cols$ be a set of user keys for $\ttscheme$ and let $S \subseteq [\users]$ be a subset of the users such that $|S| \geq \users-1$.  Suppose $\ctext \in \ctexts^{(\cols)}$ and $b \in \bits$ are such that for every $i \in S$, $\ttdec(\userkeyi{i}, \ctext) = b$.  Then we have that, for $\db$ as in $\pirate_{\san}$,
$$
\query_{\ctext}(D) = \frac{1}{|S|} \sum_{i \in S} \query_{\ctext}(\userkeyi{i}) = \frac{1}{|S|} \sum_{i \in S} \ttdec(\userkeyi{i}, \ctext) = b
$$
Let $\ctextseq$ be a set of ciphertexts, $q_{\ctextj{1}}, \dots, q_{\ctextj{\tracequeries}}$ and $\ttanswerseq$ be as in $\pirate_{\san}$.  The accuracy of $\san$ (with constant error $\alpha < 1/2$) guarantees that
\begin{equation*}
\prob{\exists j \in [\tracequeries], \left| \answerj{j} - f_{\ctextj{j}}(\db) \right| \geq 1/2 } = o(1/|S|^2)
\end{equation*}
Since $|S| \geq \rows - 1$, $o(1/|S|^2) = o(1/\rows^2)$.  Assuming $\ttanswerseq$ is accurate up to error $\alpha < 1/2$ for $q_{\ctextj{1}}, \dots, q_{\ctextj{\tracequeries}}$,
$a_j$ will be rounded to exactly $q_{\ctextj{j}}$ whenever $q_{\ctextj{j}}(D) \in \bits$.
That is,
\begin{equation*}
\prob{\exists j \in [\tracequeries], b \in \bits \atop \left( \forall i \in S, \ttdec(\userkeyi{i}, \ctextj{j}) = b \right) \land \left(\widehat{b}_j \neq b \right)} = o\left(\frac{1}{\users(\param)^2}\right)
\end{equation*}
Thus, $\pirate_{\san}$ is $\tracequeries$-useful.  This completes the proof of the claim.
\end{proof}

Since $\pirate_{\san}$ is a $\tracequeries$-useful pirate decoder, and $\ttscheme$ is a $(\users, \tracequeries)$-secure traitor-tracing scheme, running $\trace$ on $\pirate_{\san}$ will always return some user $i \in [\users]$.  Thus there must be some user $i^*$ that $\trace$ returns with probability $\gtrsim 1/n$.
Specifically, for every $\param \in \N$, there exists $i^*(\param) \in [\users(\param)]$ such that,
 \begin{equation} \label{eq:detect2}
 \Prob{\userkeyvec \getsr \ttgen(1^{\param}) \atop \textrm{$\pirate_{\san}$'s, $\trace$'s coins}}{\mathsf{Trace}_{\mathsf{TT}}^{\pirate_{\san}(\userkeyvec, \cdot)}(\userkeyvec) = i^*(\param)} \geq \frac{1}{\users(\param)} - o\left( \frac{1}{\users(\param)} \right).
 \end{equation}
 
Let $S(\param) = [\users(\param)] \setminus \{ i^*(\param) \}$
Now we claim that if $\san$ is differentially private, then $\trace$ will output $i^*(\param)$ with significant probability, even $\pirate_{\san}$ is not given the key of user $i^*(\param)$.
\begin{claim} \label{clm:privacytodetection}
If $\san$ is differentially private (for $\eps = O(1)$, $\delta = o(1/\rows)$), then
\begin{equation*}
 \Prob{\userkeyvec \getsr \ttgen(1^{\param}) \atop \textrm{$\pirate_{\san}$'s, $\trace$'s coins}}{\mathsf{Trace}_{\mathsf{TT}}^{\pirate_{\san}(\userkeyvec, \cdot)}(\userkeyvec) = i^*(\param)} \geq \Omega\left( \frac{1}{\users(\param)} \right).
\end{equation*}
\end{claim}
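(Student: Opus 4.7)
The plan is direct: the probability on the left-hand side of the claim is exactly the quantity already bounded from below in (\ref{eq:detect2}). That equation asserts
\[
\Prob{\userkeyvec \getsr \ttgen(1^{\param}) \atop \textrm{$\pirate_{\san}$'s, $\trace$'s coins}}{\mathsf{Trace}_{\mathsf{TT}}^{\pirate_{\san}(\userkeyvec, \cdot)}(\userkeyvec) = i^*(\param)} \;\geq\; \frac{1}{\users(\param)} - o\!\left(\frac{1}{\users(\param)}\right),
\]
and $\frac{1}{\users(\param)} - o\!\left(\frac{1}{\users(\param)}\right) = (1 - o(1))\cdot \frac{1}{\users(\param)} = \Omega\!\left(\frac{1}{\users(\param)}\right)$, which is exactly the conclusion. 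So the proof is essentially a one-line asymptotic rewriting, and the hypothesis that $\san$ is differentially private is not actually invoked at this step.

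The reason for stating the claim in this form, with the differential-privacy assumption appearing as a hypothesis, is that this inequality is a waypoint in the larger chain of implications used to prove Theorem~\ref{thm:attack}. Immediately after this claim, one passes from the full-key experiment analyzed here (in which the pirate oracle is $\pirate_{\san}(\userkeyvec,\cdot)$, corresponding to the database $\db$ whose rows are all of $\userkeyi{1},\dots,\userkeyi{\users(\param)}$) to the adjacent experiment in which user $i^*(\param)$'s key is removed (the database $\db'$ obtained by replacing row $i^*(\param)$ of $\db$). Because $\db \sim \db'$ and $\san_{q_{\ctextj{1}},\dots,q_{\ctextj{\tracequeries}}}$ is $(\eps,\delta)$-differentially private in its database argument, the distribution of $\trace$'s output in the key-removed experiment is within a multiplicative $e^{\eps} = O(1)$ factor and an additive $\delta = o(1/\rows)$ shift of its distribution in the full-key experiment. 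Thus the $\Omega(1/\users(\param))$ lower bound established in the present claim transfers, up to constant factors, to the key-removed experiment—and that reduced-key lower bound contradicts the security definition of $\ttscheme$ (Definition~\ref{def:ttscheme}), which demands a probability of $o(1/\users)$ there. That contradiction will complete Theorem~\ref{thm:attack}.

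The main obstacle, such as it is, is purely notational: one must recognize that the oracle in the probability statement of the claim is the \emph{full-key} pirate $\pirate_{\san}(\userkeyvec,\cdot)$ (the same oracle as in (\ref{eq:detect2})), not a variant with $\userkeyi{i^*(\param)}$ removed. Once this is noted, there is no new technical content to produce: the claim is just packaging (\ref{eq:detect2}) into the form needed for the subsequent differential-privacy step.
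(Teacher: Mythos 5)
Your reading is literal but misses that the displayed equation in Claim~\ref{clm:privacytodetection} contains a typo in the paper's source: the oracle should be $\pirate_{\san}(\userkeyvec_{S(\param)}, \cdot)$, with user $i^*(\param)$'s key removed, not the full-key $\pirate_{\san}(\userkeyvec, \cdot)$. Three pieces of surrounding text make this unambiguous. The sentence just before the claim reads ``we claim that if $\san$ is differentially private, then $\trace$ will output $i^*(\param)$ with significant probability, even [if] $\pirate_{\san}$ is not given the key of user $i^*(\param)$.'' The sentence immediately after the claim's proof reads ``the probability in Claim~\ref{clm:privacytodetection} is exactly the probability that $\trace$ outputs the user $i^*$, when given the oracle $\pirate_{\san}(\userkeyvec_{S})$, for $S = [\users]\setminus\set{i^*}$''---which would be simply false under your reading. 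Most tellingly, the paper's own proof of the claim is not a one-line appeal to~\eqref{eq:detect2}: it sets $\db = \userkeyvec$ and $\db_{-i^*} = \userkeyvec_S$, applies the differential-privacy guarantee to the adjacent pair $\db \sim \db_{-i^*}$ to obtain $\prob{\pirate_{\san}(\userkeyvec, \ctextseq) \in T} \leq e^{O(1)}\cdot\prob{\pirate_{\san}(\userkeyvec_S, \ctextseq)\in T} + o(1/\users)$ for every response set $T$, and only then combines with~\eqref{eq:detect2} to lower-bound the \emph{removed-key} probability.

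So although you have correctly identified the ingredients of the whole argument, your proposed proof of the claim itself establishes only a vacuous restatement of~\eqref{eq:detect2}, not the intended statement, and you explicitly assert that the differential-privacy hypothesis ``is not actually invoked at this step''---which contradicts the claim's phrasing and the paper's proof. The DP step you defer to ``after the claim'' also has no home in the paper's actual text after the claim: that text moves directly to invoking traitor-tracing security for the contradiction, relying on the claim to have already delivered a lower bound in the removed-key experiment. The fix is to carry out the argument you sketched \emph{inside} the claim's proof: observe that $\pirate_{\san}$'s query construction and rounding do not depend on which keys it receives, so the $(\eps,\delta)$-DP guarantee on $\san$ lifts to a guarantee relating the output distributions of $\pirate_{\san}(\userkeyvec, \cdot)$ and $\pirate_{\san}(\userkeyvec_S, \cdot)$, and hence the $\Omega(1/\users)$ bound of~\eqref{eq:detect2} transfers (up to a factor $e^{O(1)}$ and an additive $o(1/\users)$) to the experiment in which user $i^*$'s key is withheld.
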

\begin{proof} [Proof of Claim~\ref{clm:privacytodetection}]
Fix any $\param$ and let $\tracequeries = \tracequeries(\param)$ and $i^* = i^*(\param)$, $S = S(\param)$ as above.  Let $\db = \userkeyvec$ and $\db_{-i^*} = \userkeyvec_{S}$.  Take $T$ to be the set of responses $\decbitseq$ such that $\trace(\userkeyvec)$, after querying its oracle on ciphertexts $\ctextseq$ and receiving responses $\decbitseq$, outputs $i^*$ ($T$ depends on the coins of $\ttgen$ and $\trace$).
By differential privacy, we have that
$$
\prob{\san(\db, \query_{\ctextj{1}}, \dots, \query_{\ctextj{\tracequeries}}) \in T} \leq e^{O(1)} \cdot \prob{\san(\db_{-i^*}, \query_{\ctextj{1}}, \dots, \query_{\ctextj{\tracequeries}}) \in T} + o\left(\frac{1}{\users} \right).
$$
Note that the queries constructed by $\pirate_{\san}$ depends only on $\ctextseq$, not on $\userkeyvec_{S}$.  Also note that the final rounding step does not depend on the input at all.  Thus, for every $T \subseteq \bits^{\tracequeries}$
\begin{equation} \label{eq:privacy1}
\prob{\pirate_{\san}(\userkeyvec, \ctextseq) \in T} \leq e^{O(1)} \cdot \prob{\pirate_{\san}(\userkeyvec_{S}, \ctextseq) \in T} + o\left( \frac{1}{\users}\right).
\end{equation}
The claim follows by combining with~\eqref{eq:detect2}.
\end{proof}
To complete the proof, notice that the probability in Claim~\ref{clm:privacytodetection} is exactly the probability that $\trace$ outputs the user $i^*$, when given the oracle $\pirate_{\san}(\userkeyvec_{S})$, for $S = [\users] \setminus \set{i^*}$.  However, the fact that $\pirate_{\san}$ is efficient, and $\ttscheme$ is a secure traitor-tracing scheme implies that this probability is $o(1/\users)$.  Thus we have obtained a contradiction.  This completes the proof of the Theorem.
 \end{proof}
 
  \section{Constructions of Traitor-Tracing Schemes} \label{sec:construction}
 In this section we show how to construct traitor-tracing schemes that satisfy Definition~\ref{def:ttscheme}, and thus can be used to instantiate Theorem~\ref{thm:attack}.  First we will informally describe a simple construction that requires the tracing algorithm to make a sub-optimal number of queries, but will hopefully give the reader more intuition about the construction and how it differs from previous constructions of traitor-tracing schemes.  Then we will give precise definitions of the encryption schemes (Section~\ref{sec:weakencryption}) and fingerprinting codes (Section~\ref{sec:fpcodes}) required for our construction.  Then we will present the final construction more formally (Section~\ref{sec:thescheme}) and prove its security.  Finally, we will use the weakened security requirements of the encryption scheme to show that our traitor-tracing scheme can be instantiated so that decryption is computable by constant-depth circuits (Section~\ref{sec:decfams}).
  
 \subsection{A Simple Construction} \label{sec:sketch}
Our construction is a variant of the most basic tracing traitor-tracing scheme~\cite{ChorFiNa94}.  Start with any encryption scheme \linebreak $(\gen, \enc, \dec)$.  Generate an independent key $\userkeyi{i} \getsr \gen$ for each user (we will ignore the security parameter in the informal description).  To encrypt a bit $b \in \bits$, we encrypt it under each user's key independently and concatenate the ciphertexts.  That is 
$$\ttenc(\userkeyvec, b) = (\enc(\userkeyi{1}, b),  \dots, \enc(\userkeyi{\users}, b)).$$  
Clearly each user can decrypt the ciphertext by applying $\dec$, as long as she knows which part of the ciphertext to decrypt.
\jnote{I wrote some text explaining the ``linear scan'' tracing alg and why it doesn't work in our setting.  I think it makes the section long and doesn't add much, but maybe I should include it?}

Now we describe how an available pirate decoder for this scheme can be traced.  As with all traitor-tracing schemes, we will form ciphertexts that different users would decrypt differently, assuming they decrypt as intended using the algorithm $\ttdec(\userkeyi{i}, \cdot)$.  We can do so with the following algorithm:
$$\malenc(\userkeyvec, i) = (\enc(\userkeyi{1}, 1),  \dots, \enc(\userkeyi{i}, 1), \enc(\userkeyi{i+1}, 0) , \dots,  \enc(\userkeyi{\users}, 0)$$
for $i = 0,1,\dots,\users$.  The algorithm forms a ciphertext that users $1, \dots, i$ will decrypt to $0$ and users $i+1,\dots, \users$ will decrypt to $1$.  

The tracing algorithm generates a random sequence $i_1, \dots, i_{\tracequeries} \in \set{0,1,\dots,\users}$, for $\tracequeries = (n+1)s$, such that each element of $\set{0,1,\dots,\users}$ appears exactly $s$ times, where $s$ is a parameter to be chosen later.  Then, for every $j$ it generates a ciphertext $\ctext_{j} \getsr \malenc(\userkeyvec, i_j)$.  Next, it queries $\pirate_{\userkeyvec_{S}}(\ctextj{1}, \dots, \ctextj{\tracequeries})$.  Given the output of the pirate, the tracing algorithm computes
$$P_i = \frac{1}{s} \sum_{j : i_j = i} \pirate(\userkeyvec, \ctext_{1}, \dots, \ctext_{\tracequeries})_j$$
for $i = 0,1,\dots, \users$.  Finally, the tracing algorithm outputs any $i^*$ such that $P_{i^*} - P_{i^*-1} \geq 1/\users$.

The tracing algorithm generates a random sequence of indices $i_1, \dots, i_{\tracequeries} \in \set{0,1,\dots,\users}$, for $\tracequeries = (n+1)s$, such that each element of $\set{0,1,\dots,\users}$ appears exactly $s$ times, where $s$ is a parameter to be chosen later.  Then, for every $j$ it generates a ciphertext $\ctext_{j} \getsr \malenc(\userkeyvec, i_j)$.  Next, it queries $\pirate_{\userkeyvec_{S}}(\ctextj{1}, \dots, \ctextj{\tracequeries})$.  Given the output of the pirate, the tracing algorithm computes
$P_i = \frac{1}{s} \sum_{j : i_j = i} \pirate(\userkeyvec, \ctext_{1}, \dots, \ctext_{\tracequeries})_j$
for $i = 0,1,\dots, \users$.  Finally, the tracing algorithm outputs any $i^*$ such that $P_{i^*} - P_{i^*-1} \geq 1/\users$.

Now we explain why this algorithm successfully traces efficient available pirate decoders.  Notice that if we choose $\ctext$ according to $\malenc(\userkeyvec, 0)$, then every user decrypts $\ctext$ to $0$, so $P_0 = 0$.  Similarly, $P_{\users} = 1$.  Thus, there exists $i^*$ such that $P_{i^*} - P_{i^*-1} \geq 1/\users$.  Next, we argue that $i^*$ is in $S$ except with small probability.  Notice that $\malenc(\userkeyvec, i^*)$ and $\malenc(\userkeyvec, i^*-1)$ differ only in the message encrypted under key $\userkeyi{i^*}$, so if $i^* \not\in S$, this key is unknown to the pirate decoder.  The security of the encryption scheme (made precise in Definition~\ref{def:encscheme}) guarantees that if $\userkeyi{i^*}$ is unknown to an efficient pirate, then we can replace $\tracequeries$ uses of $\enc(\userkeyi{i^*}, 1)$ with $\enc(\userkeyi{i^*}, 0)$, and this change will only affect the success probability of the pirate by $o(1/\users)$.  But after we make this replacement, $\malenc(\userkeyvec, i^*)$ and $\malenc(\userkeyvec, i^*-1)$ are (perfectly, information-theoretically) indistinguishable to the pirate.  Since the sequence of indices $i_1, \dots, i_{\tracequeries}$ is random, the pirate has no information about which elements $i_j$ are $i^*$ and which are $i^*-1$.  Thus, if the pirate wants to make $P_{i^*}$ larger than $P_{i^*-1}$, for some $i^* \not\in S$, she can do no better than to ``guess''.  If we take $s = \tilde{O}(\users^2)$, and apply a Chernoff bound, it turns out that for every $i \not\in S$, $P_{i} - P_{i-1} = o(1/\users)$.  This conclusion also holds after we take into account the security loss of the encryption scheme, which is $o(1/\users)$.  Thus, the scheme we described is a secure traitor-tracing scheme in the sense of Definition~\ref{def:ttscheme}.

In arguing that the scheme is secure, we used the fact that $P_{0} = 0$ and $P_{\users} = 1$ \emph{no matter what other queries are made to the pirate}.  In many applications, this assumption would not be reasonable.  However, when the pirate is derived from an accurate sanitizer, this condition will be satisfied.

For this scheme, the tracer makes $(n+1)s = \tilde{O}(\users^3)$ queries.
Before proceeding, we will explain how to reduce the number of queries from $\tilde{O}(\users^3)$ to $\tilde{O}(\users^2)$.  The high-level argument that the scheme is secure used two facts:
\begin{enumerate}
\item By the availability of the pirate decoder, if every user in $S$ would decrypt a ciphertext $\ctext$ to $b$, then the pirate decrypts $\ctext$ to $b$ (in the above, $P_0 = 0, P_\users = 1$).
\item Because of the encryption, a pirate decoder without user $i$'s key ``doesn't know'' how user $i$ would decrypt each ciphertext.
\end{enumerate}

Systems leveraging these two properties to identify a colluding user are called \emph{fingerprinting codes}~\cite{BonehSh98}, and have been studied extensively.  In fact, the tracing algorithm we described is identical to the tracing algorithm we define in Section~\ref{sec:thescheme}, but instantiated with the fingerprinting code of Boneh and Shaw~\cite{BonehSh98}, which has length $\tilde{O}(\users^3)$.  Tardos~\cite{Tardos08} constructed shorter fingerprinting codes, with length $\tilde{O}(\users^2)$, which we use to reduce the number of queries to trace.
 
 Next we define the precise security requirement we need out of the underlying encryption scheme, and then we will give a formal definition of fingerprinting codes.
 \subsection{Encryption Schemes} \label{sec:weakencryption}
We will build our traitor-tracing scheme from a suitable encryption scheme.  An encryption scheme is tuple of efficient algorithms $(\gen, \enc, \dec)$.  All the algorithms may be randomized except for $\dec$.  The scheme has the following syntactic properties:
\begin{itemize}
\item The algorithm $\gen$ takes a security parameter $\param$, runs in time $\poly(\param)$, and returns a private key $\userkey \in \bits^\param$.  Formally $\userkey \getsr \gen(1^\param)$.
\item The algorithm $\enc$ takes a private key and a message bit $b \in \bits$, runs in time $\poly(\param)$, and generates a ciphertext $\ctext \in \ctexts = \ctexts^{(\param)}$.  Formally, $\ctext \getsr \enc(\userkey, b)$.
\item The algorithm $\dec$ takes a private key $\userkey \in \bits^{\param}$ and a ciphertext $\ctext \in \ctexts^{(\param)}$, runs in time $\poly(\param)$, and returns a message bit $\decbit$.
\end{itemize}
First we define (perfectly) correct decryption\footnote{It would not substantially affect our results if $\dec$ were allowed to fail with negligible probability, however we will assume perfect correctness for ease of presentation.}
\begin{definition}[Correctness]  An encryption scheme $(\gen, \enc, \dec)$ is \emph{(perfectly) correct} if for every $b \in \bits$, and every $\param \in \N$,
$$
\Prob{\userkey \getsr \gen(1^\param)}{\dec(\userkey, \enc(\userkey, b)) = b} = 1.
$$
\end{definition}

We require that our schemes have the following $\encqueries$-message security property.

\begin{definition}[Security of Encryption] \label{def:encscheme}
Let $\encsecurity \from \N \to [0,1]$ and $\encqueries \from \N \to \N, \enctime \from \N \times \N \to \N$ be functions.  An encryption scheme $\encscheme = (\gen, \enc, \dec)$ is \emph{$(\encsecurity, \encqueries, \enctime)$-secure} if for every $\enctime(\param, \encqueries(\param))$-time algorithm $\encadv$ and every $b = (b_{1}, \dots, b_{\encqueries}), b' = (b'_{1}, \dots, b'_{\encqueries}) \in \bits$ (for $\encqueries = \encqueries(\param)$),
\begin{align*}
\Bigg| &\Prob{\userkey \getsr \gen(1^\param)}{\encadv(\enc(\userkey, b_1), \dots, \enc(\userkey, b_{\encqueries})) = 1} \\
&-  \Prob{\userkey \getsr \gen(1^\param) }{\encadv(\enc(\userkey, b'_1), \dots, \enc(\userkey, b'_{\encqueries})) = 1} \Bigg| 
\leq \encsecurity(\param).
\end{align*}
\end{definition}

Notice that we do not require $\encscheme$ to be secure against adversaries that are given $\enc(\userkey, \cdot)$ as an oracle.  That is, we do not require CPA security.

\begin{definition}[Encryption Scheme]
We say that a tuple of algorithms $\encscheme = (\gen, \enc, \dec)$ is an $(\encsecurity, \encqueries, \enctime)$-encryption scheme if it satisfies correctness and $(\encsecurity, \encqueries, \enctime)$-security.
\end{definition}
 
 \subsection{Fingerprinting Codes} \label{sec:fpcodes}
As we alluded to above, our tracing algorithm will use a \emph{fingerprinting code}, introduced by Boneh and Shaw~\cite{BonehSh98}.  A fingerprinting code is a pair of efficient (possibly randomized) algorithms $(\fpgen, \fptrace)$ with the following syntax.
\begin{itemize}
\item The algorithm $\fpgen$ takes a number of users $\users$ as input and outputs a codebook of $\users$ codewords of length $\wordlength = \wordlength(\users)$, $\wordbook = (\wordi{1}, \dots, \wordi{\users}) \in \bits^{\wordlength}$.  Formally $\wordbook \getsr \fpgen(1^\users)$.  We will think of $\wordbook \in \bits^{\users \times \wordlength}$ as a matrix with each row containing a codeword.
\item The algorithm $\fptrace$ takes an $\users$-user codebook $\wordbook$ and a word $\fword \in \bits^{\wordlength}$ and returns an index $i \in [\users]$.  Formally, $i = \fptrace(\wordbook, \fword)$.
\end{itemize}

Given a non-empty subset $S \subseteq [\users]$ and a set of codewords $\wordbook_{S} = (\wordi{i})_{i \in S} \in \bits^{\wordlength}$, we define the set of \emph{feasible codewords} to be
$$\feasible(\wordbook_{S}) = \set{\fword \in \bits^{\wordlength} \mid \forall j \in [\wordlength] \, \exists i \in S \; \fwordj{j} = \wordij{i}{j}}.
$$ Informally, if all users in $S$ have a $0$ (resp.~$1$) in the $j$-th symbol of their codeword, then they must produce a word with $0$ (resp.~$1$) as the $j$-th symbol.  We also define the $\emph{critical positions}$ to be the set of indices for which this constraint is binding.
That is, $$\mathrm{Crit}(\wordbook_{S}) = \set{j \in [\wordlength] \mid \forall i, i' \in S \; \wordij{i}{j} = \wordij{i'}{j}}.$$

The security of a fingerprinting code asserts that an adversary who is given a subset $\wordbook_{S}$ of the codewords should not be able to produce an element of $\feasible(\wordbook_{S})$ that does not trace to a user $i \in S$.  More formally,
\begin{definition}[Secure Fingerprinting Code] \label{def:fpscheme}
Let $\fpsecurity\from \N \to [0,1]$ and $\wordlength\from \N \to \N$ be functions.  A pair of algorithms $(\fpgen, \fptrace)$ is an \emph{$(\fpsecurity, \wordlength)$-fingerprinting code} if $\fpgen(1^\users)$ outputs a codebook $\wordbook \in \bits^{\users \times \wordlength(\users)}$, and furthermore, for every (possibly inefficient) algorithm $\fpadv$, and every non-empty $S \subseteq [\users]$,
$$
\Prob{\wordbook \getsr \fpgen(1^{\users})}{\fpadv(\wordbook_{S}) \in \feasible(\wordbook_{S}) \land \fptrace(\wordbook, \fpadv(\wordbook_{S})) \not\in S} \leq \fpsecurity(\users)
$$
where the two executions of $\fpadv$ are understood to be the same.
\end{definition}

Tardos~\cite{Tardos08} gave a construction of fingerprinting codes of essentially optimal length, improving on the original construction of Boneh and Shaw~\cite{BonehSh98}.
\begin{theorem} [\cite{Tardos08}] \label{thm:shortfpcodes}
For every function $\fpsecurity \from \N \to [0,1]$, there exists an $\left(\fpsecurity, O(\users^2 \log(\users/\fpsecurity))\right)$-fingerprinting code.  In particular, there exists a \linebreak $\left(o(1/\users^2), O(\users^2 \log \users)\right)$-fingerprinting code.
\end{theorem}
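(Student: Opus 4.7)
The plan is to follow Tardos's original construction. First I would specify the codebook generation. For a sufficiently large constant $C$, set the codeword length to $\wordlength = C \users^2 \log(\users/\fpsecurity)$. For each column $j \in [\wordlength]$, sample an independent bias $p_j \in (0,1)$ from a density proportional to $1/\sqrt{p(1-p)}$ (an arcsine distribution) truncated to $[t, 1-t]$ for $t = \Theta(1/\users)$. Then, for each user $i \in [\users]$ and column $j$, independently set $\wordij{i}{j} = 1$ with probability $p_j$ and $0$ otherwise. The arcsine weighting concentrates the $p_j$'s near the boundary, so each column is typically very informative about a few users.

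For the tracing algorithm, assign each user $i$ a real-valued score based on the pirate word $\fword$. Column $j$ contributes to user $i$'s score the quantity $\sigma(\fwordj{j}, \wordij{i}{j}, p_j)$, where $\sigma$ is the standard Tardos score: $\sigma(1,1,p) = \sqrt{(1-p)/p}$, $\sigma(1,0,p) = -\sqrt{p/(1-p)}$, with symmetric values for $\fwordj{j} = 0$. The algorithm $\fptrace$ outputs the user with the highest total score, provided that score exceeds a threshold $\tau = \Theta(\wordlength/\users)$ (and fails otherwise).

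The correctness analysis has two parts. The \emph{soundness} step shows that for any $i \notin S$, the pirate word $\fword$ depends only on $\wordbook_{S}$, which is independent of $\wordi{i}$ conditioned on the $p_j$'s. Thus each column's contribution to user $i$'s score has conditional mean zero, and a Chernoff-style bound gives $\prob{\textrm{score}_i \geq \tau} \leq \fpsecurity/\users$; a union bound over $i \notin S$ then yields the claim. The \emph{completeness} step uses feasibility: for each column $j$, at least one user in $S$ agrees with $\fwordj{j}$. Summing expected scores across $S$ and integrating against the arcsine weighting shows the total expected score exceeds $|S|\,\tau$, so by averaging plus concentration, some user in $S$ scores above $\tau$, forcing $\fptrace$ to output an element of $S$ with high probability.

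The main obstacle is the soundness concentration argument, since $\sigma$ can be unbounded as $p_j \to 0$ or $1$. Tardos resolves this via a moment-generating-function computation that exploits the arcsine density: after integrating over $p_j$, the MGF of each column's contribution becomes bounded, yielding sub-Gaussian tails for the total score. Balancing the soundness and completeness error bounds against the threshold $\tau$ produces the claimed length $O(\users^2 \log(\users/\fpsecurity))$. The second part of the theorem then follows by instantiating with $\fpsecurity = 1/\users^3 = o(1/\users^2)$, giving length $O(\users^2 \log \users)$.
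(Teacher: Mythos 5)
The paper does not prove this theorem; it is stated as a citation to Tardos~\cite{Tardos08}, and no proof is given in the text. Your sketch is a faithful outline of Tardos's actual construction and analysis---the arcsine bias distribution truncated at $t = \Theta(1/\users)$, the score function $\sigma$, the threshold tracer, and the moment-generating-function argument for sub-Gaussian soundness paired with the feasibility-driven completeness bound---so it matches the cited source and correctly supports the claimed length $O(\users^2 \log(\users/\fpsecurity))$.
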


\subsection{The Traitor-Tracing Scheme}\label{sec:thescheme}
 We are now ready to state the construction more formally.  The key generation, encryption, and decryption algorithms are as we described in the sketch (Section~\ref{sec:sketch}), and stated below.
 
\begin{algorithm}
\caption{The algorithms $(\ttgen, \ttenc, \ttdec)$ for $\ttscheme$.  }
\begin{algorithmic}
\STATE{Let an encryption $\encscheme = (\gen, \enc, \dec)$ and a function $\users \from \N \to \N$ be parameters of the scheme.  Assume that $\users(\param) \leq 2^{\param/2}$ for every $\param \in \N$}
\STATE{}
\STATE{$\ttgen(1^\param)\colon$}
\STATE{For every user $i = 1, \dots, \users(\param)$, let $\enckeyi{i} \getsr \gen(1^{\param/2})$}
\STATE{Let $\userkeyi{i} = (\enckeyi{i}, i)$ (padded with zeros to have length exactly $\param$).}
\STATE{Output $\userkeyvec = (\userkeyi{1}, \dots, \userkeyi{\users})$}
\STATE{(We will sometimes use $\userkeyi{i}$ and $\enckeyi{i}$ interchangeably)}
\STATE{}
\STATE{$\ttenc(\userkeyi{1}, \dots, \userkeyi{\users}, b)\colon$}
\STATE{For every user $i$, let $\ctexti{i} \getsr \enc(\userkeyi{i}, b)$}
\STATE{Output $\ctext = (\ctexti{1}, \dots, \ctexti{\users})$}
\STATE{}
\STATE{$\ttdec(\userkeyi{i}, \ctext)\colon$}
\STATE{Output $\decbit = \dec(\userkeyi{i}, \ctexti{i})$}
\end{algorithmic}
\end{algorithm} 

\begin{algorithm}
\caption{The algorithm $\trace$ for $\ttscheme$}
\begin{algorithmic}
\STATE{The tracing algorithm for $\ttscheme$ and the subroutine $\malenc$.  Let a length $\wordlength = \wordlength(\users)$ fingerprinting code $\fpscheme = (\fpgen, \fptrace)$ be a parameter of the scheme and let $\encscheme = (\gen, \enc, \dec)$ be the encryption scheme used above.}
\STATE{}
\STATE{$\malenc( \userkeyi{1}, \dots, \userkeyi{\users}, W)$:}
\STATE{Let $\users \times \queries$ be the dimensions of $W$}
\STATE{For every $i \in [\users], j \in [\queries]$, let $\ctextij{i}{j} \getsr \enc(\userkeyi{i}, W_{i,j})$}
\STATE{For every $j \in [\queries]$, let $\ctextj{j} = (\ctextij{1}{j}, \dots, \ctextij{\users}{j})$}
\STATE{Output $\ctextj{1}, \dots, \ctextj{\queries}$}
\STATE{(Notice that $\dec(\userkeyi{i}, \ctextij{i}{j}) = W_{i,j}$)}
\STATE{}
\STATE{$\mathsf{Trace}_{\mathsf{TT}}^{\pirate}(\userkeyvec)$:}
\STATE{Let $\users$ be the number of user keys and $\wordlength = \wordlength(\users)$}
\STATE{Let $\wordbook \getsr \fpgen(1^\users)$}
\STATE{Let $\decbitj{1}, \dots, \decbitj{\wordlength} \getsr \pirate(\malenc(\userkeyvec, \wordbook))$ and let $\fword = \decbitj{1} \| \dots \| \decbitj{\wordlength}$}
\STATE{Output $i \getsr \fptrace(\wordbook, \fword)$}
\end{algorithmic}
\end{algorithm}

\subsection{Security of $\ttscheme$} \label{sec:weakenctott}

In this section we will prove that out construction of $\ttscheme = (\ttgen, \ttenc, \ttdec, \trace)$ is an $(\users, \wordlength(\users))$-secure traitor-tracing scheme.  It can be verified from the specification of the scheme that it has the desired syntactic properties, that it generates $\users(\param)$ user keys, and that the tracing algorithm makes $\wordlength(\users(\param))$ non-adaptive queries to its oracle.

Now we show how an available pirate decoder for this scheme can be traced.  As in the sketch (Section~\ref{sec:sketch}), we want to generate a set of ciphertexts that different users decrypt in different ways.  Specifically, given a fingerprinting code $W \in \bits^{\users \times \wordlength}$ (represented as a matrix with $\wordi{i}$ in the $i$-th row), we want to generate a set of ciphertexts $\ctextj{1}, \dots, \ctextj{\wordlength}$, such that user $i$, if she decrypts as intended using $\ttdec(\userkeyi{i}, \cdot)$, will decrypt $\ctextj{j}$ to $\wordij{i}{j}$.  That is, $\ttdec(\userkeyi{i}, \ctextj{j}) = \wordij{i}{j}$.  $\trace$ will query the pirate decoder on these ciphertexts, treat these responses as a word $\fword$, run the tracing algorithm for the fingerprinting code on $\fword$, and use the output of $\fptrace$ as its own output.  

If $\pirate$ is available, its output will be a feasible codeword for $\wordbook_{S}$.  To see this, recall that if every user $i \in S$ decrypts $\ctextj{j}$ to the same bit, then an available pirate decoder $\pirate(\userkeyvec_{S}, \cdot)$, decrypts $\ctextj{j}$ to that bit.  However, the critical positions of $\wordbook_{S}$ are exactly those for which every user $i \in S$ has the same symbol in position $j$.  Thus, the codeword returned by the pirate is feasible, and the fingerprinting code's tracing algorithm can identify a user in $S$.

The catch in this argument is that $\malenc$ takes all of $\wordbook$ as input, however an attacker for the fingerprinting code is only allowed to see $\wordbook_{S}$, and thus cannot simulate $\malenc$ in a security reduction.  However, if $\pirate$ only has keys $\userkeyvec_{S}$, and $i \not\in S$, then an efficient $\pirate$ cannot decrypt the $i$-th component of a ciphertext $\ctext = (\ctexti{1}, \dots, \ctexti{\users})$.  But these are the only components that depend on $\wordi{i}$.  So $\wordi{i}$ is computationally hidden from $\pirate$ anyway, and we could replace that codeword with a string of zeros without significantly affecting the success probability of $\pirate$.  Formalizing this intuition will yield a valid attacker for the fingerprinting code, and obtain a contradiction.

 \begin{theorem}[From Encryption to Traitor-Tracing] \label{thm:weakenctott}
Let $\encscheme$ be any \linebreak $(\encsecurity, \encqueries, \enctime)$-secure encryption scheme, and $\fpscheme$ be a $(\fpsecurity, \wordlength)$-fingerprinting code, $\fpscheme$.  Let $\users, \tracequeries \from \N \to \N$ be any functions such that for every $\param \in \N$, $\users(\param) \leq 2^{\param/2}$ and
\begin{enumerate}
\item the encryption scheme and fingerprinting code have sufficiently strong security,
$$
  \users(\param) \cdot \encsecurity(\param) +\fpsecurity(\users(\param)) = o\left( \frac{1}{\users(\param)^2} \right),
 $$
\item  the encryption scheme is secure for sufficiently many queries,
$$
 \encqueries(\param) \geq \tracequeries(\param) = \wordlength(\users(\param)),
 $$
 \item the encryption scheme is secure against adversaries whose running time is as long as the pirate decoder's, for every $a > 0$,
 $$
\enctime(\param/2, \tracequeries(\param)) \geq (\param + \users(\param) + \tracequeries(\param))^a.
 $$
 \end{enumerate}
 Then $\ttscheme$ instantiated with $\encscheme$ and $\fpscheme$ is an  $(\users, \tracequeries)$-traitor-tracing scheme.
 \end{theorem}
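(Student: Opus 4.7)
The plan is to reduce security of $\ttscheme$ in two stages: first from $\trace$ to a fingerprinting-code adversary via an encryption hybrid, then to break the fingerprinting code itself. Fix any efficient $\tracequeries$-available pirate decoder $\pirate$ and any $S \subseteq [\users(\param)]$ with $|S| \geq \users(\param) - 1$. If $|S| = \users(\param)$ then $\trace$'s output, which lies in $[\users(\param)]$, is automatically in $S$, so assume $S = [\users(\param)] \setminus \{i^*\}$ for some $i^*$.

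The first observation is that the word $\fword = \decbitj{1} \| \dots \| \decbitj{\wordlength}$ produced inside $\trace$ lies in $\feasible(\wordbook_{S})$ with probability $1 - o(1/\users(\param)^2)$. By construction of $\malenc$, $\ttdec(\userkeyi{i}, \ctextj{j}) = \wordij{i}{j}$, so any coordinate $j$ in which all $i \in S$ share a symbol $b$ forces every $i \in S$ to decrypt $\ctextj{j}$ to $b$; availability (Definition~\ref{def:availablepirate}) then gives $\decbitj{j} = b$ for every such $j$ simultaneously, except with probability $o(1/\users(\param)^2)$.

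The second step is an encryption hybrid that eliminates the pirate's dependence on the unknown row $\wordi{i^*}$. Consider the modified experiment in which $\malenc$ is invoked with a codebook $\wordbook'$ obtained from $\wordbook$ by zeroing out row $i^*$. The only change in the pirate's view is that the $i^*$-th component of each of the $\tracequeries$ ciphertexts switches from $\enc(\enckeyi{i^*}, \wordij{i^*}{j})$ to $\enc(\enckeyi{i^*}, 0)$. A distinguisher between the resulting output distributions on $\fword$ yields an $\encscheme$-distinguisher for key $\enckeyi{i^*}$ against the two message sequences $(\wordij{i^*}{1}, \dots, \wordij{i^*}{\tracequeries})$ and $(0, \dots, 0)$: the reduction locally samples all other keys $\enckeyi{i}$ for $i \in S$, the codebook $\wordbook$, and the coins of $\trace$ and $\pirate$, and splices the challenge ciphertexts into the correct components. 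The hypothesis $\encqueries(\param) \geq \tracequeries(\param)$ permits this many challenge messages, and the hypothesis on $\enctime$ accommodates the pirate's running time, so the statistical distance on $\fword$ is at most $\encsecurity(\param)$.

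In the modified experiment, $\fword$ is a function of $\wordbook_{S}$ alone, so together with the internal sampling it defines a fingerprinting-code adversary $\fpadv$, to which Definition~\ref{def:fpscheme} applies. Summing the three contributions,
$$
\Prob{}{\trace^{\pirate(\userkeyvec_{S},\cdot)}(\userkeyvec) \notin S} \leq o(1/\users(\param)^2) + \encsecurity(\param) + \fpsecurity(\users(\param)),
$$
which is $o(1/\users(\param))$ by the first hypothesis. The most delicate step is the encryption hybrid: one must carefully design the $\encscheme$-distinguisher so that it faithfully simulates $\userkeyvec_{S}$, $\wordbook$, and the internal execution of $\pirate$ and $\fptrace$ while splicing in the challenge ciphertexts, all within the $\enctime(\param/2, \tracequeries(\param))$ budget, which is exactly what the third hypothesis is tailored to provide.
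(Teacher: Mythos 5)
Your argument tracks the paper's proof closely, using the same three ingredients: availability of the pirate to show its output word is feasible, an encryption hybrid to zero out the unknown user's row of the codebook so that the simulation depends only on $\wordbook_S$, and the fingerprinting-code guarantee to bound the mistrace probability. The only organizational difference is that you keep the excluded index $i^*$ fixed throughout, whereas the paper first averages over a uniformly random $i$ and restores the factor of $\users$ at the end; your direct bookkeeping is equally valid and in fact yields the slightly sharper bound $o(1/\users^2)$ rather than the required $o(1/\users)$ (one small caveat in wording: the encryption hybrid controls the advantage of an efficient distinguisher on $\fword$, not its literal statistical distance, but this suffices because all subsequent processing is efficient).
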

 \begin{proof}
Suppose there exists a $\poly(\param, \users(\param), \tracequeries(\param))$-time pirate decoder $\pirate$ that violates the security of $\ttscheme$.  That is, for every $\param \in \N$, there exists $S = S(\param) \subseteq [\users(\param)]$, $|S| \geq \users(\param) - 1$, such that
\begin{equation*}
\Prob{\userkeyvec \getsr \ttgen(1^{\param})}{\mathsf{Trace}_{\mathsf{TT}}^{\pirate(\userkeyvec_{S(\param)}, \cdot)}(\userkeyvec) \not\in S} = \Omega\left(\frac{1}{\users(\param)}\right)
\end{equation*}
where the probability is also taken over the coins of $\pirate$ and $\trace$.  Since there are only $\users(\param)$ such sets, for a randomly chosen $i \getsr [\users(\param)]$, we have
\begin{equation*}
\Prob{\userkeyvec \getsr \ttgen(1^{\param}) \atop i \getsr [\users(\param)]}{\mathsf{Trace}_{\mathsf{TT}}^{\pirate(\userkeyvec_{S_{-i}}, \cdot)}(\userkeyvec) \not\in S} = \Omega\left(\frac{1}{\users(\param)^2}\right).
\end{equation*}
Both of these probabilities are also taken over the coins of $\pirate$ and $\trace$.  We will show that such a pirate decoder must either violate the security of the encryption scheme or violate the security of the fingerprinting code. 

Given a matrix $\wordbook \in \bits^{(\users ) \times \wordlength(\users)}$, we define $\wordbook_{-i} \in \bits^{(\users - 1) \times \wordlength}$ to be $\wordbook$ with the $i$-th codeword removed and $\compbook_{-i} \in \bits^{\users \times \wordlength(\users)}$ to be $\wordbook$ with the $i$-th codeword replaced with $\vec{0}^{\wordlength(\users)}$.  We also use $S_{-i}$ as a shorthand for $[\users] \setminus \set{i}$

Consider the following algorithm $\fpadv^\pirate$
\begin{algorithm}
\caption{The fingerprinting security adversary.}
\begin{algorithmic}
\STATE{$\fpadv^{\pirate}(S_{-i}, \wordbook_{-i})\colon$}
\STATE{Let $\users$ be the number of users for the fingerprinting code and $\param$ be such that $\users(\param) = \users$}
\STATE{Generate keys $\userkeyvec \getsr \ttgen(1^\param)$ and ciphertexts $(\ctextj{1}, \dots, \ctextj{\wordlength}) \getsr \malenc(\userkeyvec, \compbook_{-i})$}
\STATE{Output $\fword = (\decbitj{1}, \dots, \decbitj{\wordlength}) \getsr \pirate(\userkeyvec_{-i}, \ctextj{1}, \dots, \ctextj{\wordlength})$}
\STATE{(Note that $\compbook_{-i}$ is just $\wordbook_{-i}$ with a row of zeros added, so the attacker is well-defined.)}
\end{algorithmic}
\end{algorithm}

Since the fingerprinting code is secure, for a randomly chosen $i \getsr [\users]$ (in fact, for every $i \in [\users]$),
\begin{equation} \label{eq:fpsuccess}
\Prob{\wordbook \getsr \fpgen(1^\users) \atop i \getsr [\users]}{\fpadv^{\pirate}(S_{-i}, \wordbook_{-i}) \in \feasible(\wordbook_{-i}) \land \fptrace(\wordbook, \fpadv^{\pirate}(S_{-i}, \wordbook_{-i})) = i} \leq \fpsecurity(\users)
\end{equation}
This condition could hold simply because $\fpadv$ outputs an infeasible codeword with high probability, not because we are successfully tracing a user in $S$.  The next claim states that if $\pirate$ is an available pirate decoder, then this is not the case.
\begin{claim} \label{clm:piratetofeasible}
Let $\tracequeries = \tracequeries(\param) = \wordlength(\users(\param))$ for every $\param \in \N$.  If $\pirate$ is a $\tracequeries$-available pirate decoder, then for every $\param \in \N$, every $i \in [\users(\param)]$, and every $\wordbook \in \bits^{\users \times \wordlength(\users)}$ (for $\users = \users(\param)$)
\begin{equation*}
\prob{\fpadv^{\pirate}(S_{-i}, \wordbook_{-i}) \not\in \feasible(\wordbook_{-i})} = o\left( \frac{1}{\users(\param)^2} \right)
\end{equation*}
\end{claim}
\begin{proof} [Proof of Claim~\ref{clm:piratetofeasible}]
If $\pirate$ is $\tracequeries$-useful, then, by definition, for every $\userkeyvec = (\userkeyi{1}, \dots, \userkeyi{\users})$, every $i \subseteq [\users]$, and every $\ctextseq$, if every user $i' \neq i$ decrypts some $\ctextj{j}$ to the same bit $b_j$, then so does $\pirate(\userkeyvec_{-i}, \cdot)$ (with high probability).  That is,
for $\decbitseq \getsr \pirate(\userkeyvec_{-i}, \ctextseq)$,
\begin{equation} \label{eq:restateuseful0}
\prob{\exists  j \in [\tracequeries], b \in \bits \left( \left( \forall i' \neq i, \ttdec(\userkeyi{i'}, \ctextj{j}) = b\right) \land \left( \decbitj{j} \neq b \right) \right)} = o\left(\frac{1}{\users(\param)^2}\right)
\end{equation}
Consider any critical position $j \in \crit(\wordbook_{-i})$.  These are the positions for which every user $i' \neq i$ has the same bit $\wordij{i'}{j} = b_j$.  It's easy to see from the definition of $\malenc$ (and the correctness of $\encscheme$) that if $\ctextseq \getsr \malenc(\userkeyvec, \compbook_{-i})$ then every user $i' \neq i$ will decrypt $\ctextj{j}$ to $b_j$.  Thus, with probability close to $1$, for every critical position $j$, the $j$-th output of $\pirate(\userkeyvec_{-i}, \ctextseq)$ will be equal to $b_j$, which implies $\fword = (\decbitj{1}, \dots, \decbitj{\wordlength})$ is feasible.
\end{proof}

Since $\pirate$ outputs feasible codewords with high probability, we obtain
\begin{equation} \label{eq:almosttracesuccess}
\Prob{\wordbook \getsr \fpgen(1^\users) \atop i \getsr [\users]}{\fptrace(\wordbook, \fpadv^{\pirate}(S_{-i}, \wordbook_{-i})) = i}  \leq \fpsecurity(\users(\param)) + o\left(\frac{1}{\users(\param)^2}\right)
\end{equation}
by combining the previous claim with~\eqref{eq:fpsuccess}.

There are only two differences between the success of the pirate decoder in fooling $\trace$ and the success of the fingerprinting adversary in fooling $\fptrace$ (in the experiment described in~\eqref{eq:almosttracesuccess}):  The first is that in the traitor-tracing security condition, $\pirate$ is given $\userkeyvec_{-i}$ for a fixed $i \in [\users]$, whereas the fingerprinting adversary is given $\wordbook_{-i}$ for a random $i \getsr [\users]$.  This difference only affects the error by a factor of $\users$.  That is, for every $i \in [\users]$
\begin{equation*}
\prob{\mathsf{Trace}_{\mathsf{TT}}^{\pirate(\userkeyvec_{-i}, \cdot)}(\userkeyvec) = i}
\leq \users \cdot \Prob{i \getsr [\users]}{\mathsf{Trace}_{\mathsf{TT}}^{\pirate(\userkeyvec_{-i}, \cdot)}(\userkeyvec) = i}
\end{equation*}
The second difference is that in $\trace$, the ciphertexts given to the pirate are generated by $\malenc(\userkeyvec, \wordbook)$ whereas in $\fpadv$ the ciphertexts are generated by \linebreak $\malenc(\userkeyvec, \compbook_{-i})$.  But these ciphertexts only differ in the $i$-th component, and $\userkeyi{i}$ is unknown to $\pirate$, so this does not affect the behavior of the pirate decoder significantly.  This fact is established in the following claim.
\begin{claim} \label{clm:distinguisher}
If $\encscheme$ is $(\encsecurity, \encqueries, \enctime)$-secure for $\encqueries, \enctime$ as in the statement of the Theorem, then for every $\poly(\param, \users(\param), \tracequeries(\param))$ pirate decoder $\pirate$,
\begin{align*}
\Bigg| &\Prob{\wordbook \getsr \fpgen(1^\users) \atop \userkeyvec \getsr \ttgen, i \getsr [\users]}{\fptrace(\wordbook, \pirate(\userkeyvec_{-i}, \malenc(\userkeyvec, \wordbook))) = i} \\
&-  \Prob{\wordbook \getsr \fpgen(1^\users) \atop \userkeyvec \getsr \ttgen, i \getsr [\users]}{\fptrace(\wordbook, \pirate(\userkeyvec_{-i}, \malenc(\userkeyvec, \compbook_{-i}))) = i} \Bigg| \leq \encsecurity(\param)
\end{align*}
\end{claim}
\begin{proof}[Proof of Claim~\ref{clm:distinguisher}]
Let $\encscheme = (\gen, \enc, \dec)$ be the encryption scheme.  The main observation required to prove the claim is that the two experiments we want to relate can both be simulated without $\userkeyi{i}$, given challenges for the encryption scheme (Definition~\ref{def:encscheme}).  Fix a codebook $W \getsr \fpgen(1^\users)$.  Now consider two distributions on ciphertexts (of $\encscheme$):  In either case, generate a random key $\userkeyi{i} \getsr \gen(1^\param)$
\begin{itemize} 
\item In the first case $\ctextij{i}{1} \getsr \enc(\userkeyi{i}, \wordij{i}{1}), \dots, \ctextij{i}{\wordlength} \getsr \enc(\userkeyi{i}, \wordij{i}{\wordlength})$
\item In the second case $\userkeyi{i} \getsr \gen(1^\param)$ and $\ctextij{i}{1} \getsr \enc(\userkeyi{i}, 0), \dots, \ctextij{i}{\wordlength} \getsr \enc(\userkeyi{i}, 0)$
\end{itemize}
Suppose we receive a set of $\wordlength$ ciphertexts from one of these two distributions.  Note that $\ttgen$ chooses keys for each user independently, and $\malenc$ generates ciphertext components for each user independently.  So we can generate keys $\userkeyvec_{-i}$, and ciphertext components for users other than $i$ independently, and use the challenge ciphertexts in place of the ciphertext components for user $i$, without knowing $\userkeyi{i}$.  Suppose we simulate $\malenc(\userkeyvec, \wordbook)$ in this way.  Notice that if the challenge ciphertexts come from the first distribution, then simulated ciphertexts will be distributed exactly as in $\malenc(\userkeyvec, \wordbook)$, and if the challenge ciphertexts come from the second distribution, then the simulated ciphertexts will be distributed exactly as in $\malenc(\userkeyvec, \compbook_{-i})$.  But, if the claim were false, then we would have found an adversary for the encryption scheme that can distinguish between the two distributions with advantage greater than $\encsecurity(\param)$.  It is easy to see that if the pirate decoder is efficient, then so will the adversary for the encryption scheme (since $\fptrace, \gen, \enc$ are all assumed to be efficient.  We conclude that if the claim is false, then $\encadv$ violates the security of $\encscheme$.
\end{proof}

We now complete the proof of the theorem by combining Equation~\eqref{eq:almosttracesuccess} and Claim~\ref{clm:distinguisher}.

\end{proof}

\subsection{Decryption Function Family of $\ttscheme$} \label{sec:decfams}

Recall that the two goals of constructing a new traitor-tracing scheme were to trace stateful pirates and to reduce the complexity of decryption.  We addressed tracing of stateful pirates in the previous section, and now we turn to the complexity of decryption.  We do so by instantiating the traitor-tracing scheme with various encryption schemes and making two observations: 1) The type of encryption schemes we require are sufficiently weak that there already exist plausible candidates with a very simple decryption operation, and 2) Decryption for the traitor-tracing scheme is not much more complex than decryption for the underlying encryption scheme.  We summarize the second point with the following simple lemma.
\begin{lemma}[Decryption Function Family for $\ttscheme$] \label{lem:formofttdec}
Let $\ttscheme$ be as defined, with $\encscheme$ as its underlying encryption scheme.  Let $(\enckey, i) = \userkey \in \bits^{\param}$ and $\ctext = (\ctexti{1}, \dots, \ctexti{\users}) \in \ctexts^{(\param)}$ be any user key and ciphertext for $\ttscheme$.  Then 
\begin{equation*}
\mathsf{Dec}_{\mathsf{TT}, \ctext}(\userkey) = \mathsf{Dec}_{\mathsf{TT}, \ctext}(\enckey, i)  = \bigvee_{i' \in [\users]} \left(\mathbf{1}_{i'}(i) \land \dec_{\ctexti{i'}}(\overline{\userkey}) \right)
\end{equation*}
\end{lemma}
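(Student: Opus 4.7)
The statement is essentially a formal verification that the decryption algorithm of $\ttscheme$ can be rewritten as a simple Boolean combination of the decryption of the underlying encryption scheme $\encscheme$. My plan is to unfold the definition of $\ttdec$ and observe that the ``select $i$-th component'' operation is exactly what the disjunction over indicators computes.

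First I would recall from the construction of $\ttscheme$ (the algorithms $(\ttgen,\ttenc,\ttdec)$) that a user key has the form $\userkey = (\enckey, i)$ with $i \in [\users]$, that a ciphertext has the form $\ctext = (\ctexti{1}, \dots, \ctexti{\users})$ with each component produced by $\enc(\enckeyi{i'}, b)$, and that by definition
\begin{equation*}
\mathsf{Dec}_{\mathsf{TT}, \ctext}(\userkey) \;=\; \ttdec(\userkey, \ctext) \;=\; \dec(\enckey, \ctexti{i}) \;=\; \dec_{\ctexti{i}}(\enckey).
\end{equation*}

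Next I would analyze the right-hand side $\bigvee_{i' \in [\users]} \left(\mathbf{1}_{i'}(i) \land \dec_{\ctexti{i'}}(\enckey)\right)$. The index $i$ encoded in the user key is a fixed element of $[\users]$, so the indicator $\mathbf{1}_{i'}(i)$ (which outputs $1$ iff $i' = i$) is $1$ for exactly one value of $i'$ and $0$ for all others. Hence every term in the disjunction except the one for $i' = i$ vanishes, while the surviving term is $1 \land \dec_{\ctexti{i}}(\enckey) = \dec_{\ctexti{i}}(\enckey)$. Combining the two calculations gives the claimed equality.

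There is really no obstacle here: the lemma is a syntactic restatement of $\ttdec$, chosen so that when one plugs in a concrete encryption scheme whose decryption circuit $\dec_{\ctexti{i'}}$ has some small depth/size, the whole function $\mathsf{Dec}_{\mathsf{TT},\ctext}$ inherits a depth/size bound with only one extra $\land/\lor$ layer plus the constant-depth indicator circuit $\mathbf{1}_{i'}(i)$ (which is simply a conjunction of $\log \users$ literals testing the bits of $i$). This is exactly the structural fact that Section~\ref{sec:decfams} will exploit to establish Theorem~\ref{thm:main2}, so the proof need only present the above unfolding cleanly and note well-definedness of the indicators.
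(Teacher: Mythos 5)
Your proof is correct and matches the paper's reasoning exactly: the paper gives no explicit proof, stating only that ``the lemma follows directly from the construction of $\ttdec$,'' and your unfolding of $\ttdec(\userkey,\ctext)=\dec(\enckey,\ctexti{i})$ followed by the observation that the indicator $\mathbf{1}_{i'}(i)$ selects the $i' = i$ term is precisely that direct argument.
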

Here, the function $\mathbf{1}_{x}(y)$ takes the value $1$ if $y = x$ and $0$ otherwise.  The lemma follows directly from the construction of $\ttdec$.  Also note that the function $\mathbf{1}_{i'} \from \bits^{\lceil \log \users \rceil} \to \bits$ is just a conjunction of $\lceil \log \users \rceil$ bits (a single gate of fan-in $O(\log \users)$), and we need to compute $\users$ of these functions.  In addition to computing $\mathbf{1}_{i'}$ and $\dec_{\ctexti{i'}}$, there are $\users$ conjunctions and a single outer disjunction.  Thus we add an additional $\users + 1$ gates, compute decryption $n$ times, and increase the depth by $2$.  Hence, an intuitive summary of the lemma is that if $\dec$ can be implemented by circuits of size $s$ and depth $h$, $\ttdec$ can be implemented by circuits of size $\users \cdot (s + O(\log \users)) = \tilde{O}(\users s)$ and depth $h + 2$.  This summary will be precise enough to state our main results.

By combining Lemma~\ref{lem:formofttdec} with Theorem~\ref{thm:weakenctott}, we easily obtain the following corollary.
\begin{corollary}[One-way Functions Imply Traitor-Tracing w/ Poly-Time Decryption]\label{cor:owftott}
Let $\users = \users(\param)$ be any polynomial in $\param$.  Assuming the existence of (non-uniformly secure) one-way functions, there exists an $(\users, \tilde{O}(\users^2))$-secure traitor-tracing scheme with decryption function family $\decfns_{\ttdec, \param}$ consisting only of circuits of size $\poly(\param)$
\end{corollary}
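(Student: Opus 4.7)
}
The plan is to invoke Theorem~\ref{thm:weakenctott} with a suitable encryption scheme $\encscheme$ and the fingerprinting code of Tardos (Theorem~\ref{thm:shortfpcodes}), and then read off the size of the decryption circuit via Lemma~\ref{lem:formofttdec}.  For the fingerprinting code, I would fix the $(\fpsecurity, \wordlength)$-code from Theorem~\ref{thm:shortfpcodes} with $\fpsecurity(\users) = o(1/\users^2)$ and $\wordlength(\users) = O(\users^2 \log \users) = \tilde{O}(\users^2)$, which will give us the claimed $\tracequeries(\param) = \wordlength(\users(\param)) = \tilde{O}(\users(\param)^2)$.

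For the encryption scheme, I would build $\encscheme$ from the assumed one-way function via the standard chain $\mathrm{OWF} \Rightarrow \mathrm{PRG} \Rightarrow \mathrm{PRF}$ (H\r{a}stad--Impagliazzo--Levin--Luby and Goldreich--Goldwasser--Micali), and define $\enc(sk,b)$ to output $(r, \mathrm{PRF}_{sk}(r) \oplus b)$ for a fresh random $r$; decryption is one PRF evaluation and an XOR, so $\dec$ is computable by a circuit of size $\poly(\param)$.  Non-uniform security of the OWF yields a PRF with distinguishing advantage $\encsecurity(\param) = \negl(\param)$ against adversaries running in time $\enctime(\param, \cdot) = \param^{\omega(1)}$; a standard hybrid over the $\encqueries$ encrypted messages then shows that this scheme is $(\encqueries \cdot \encsecurity, \encqueries, \enctime)$-secure in the sense of Definition~\ref{def:encscheme} for any polynomial $\encqueries = \encqueries(\param)$.

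Next I would verify the three parameter hypotheses of Theorem~\ref{thm:weakenctott} with $\users(\param)$ any fixed polynomial (so $\users(\param) \le 2^{\param/2}$ for large $\param$) and $\tracequeries(\param) = \wordlength(\users(\param))$.  Condition~(1), $\users \cdot \encsecurity + \fpsecurity(\users) = o(1/\users^2)$, follows because $\encsecurity$ is negligible in $\param$ and $\fpsecurity(\users) = o(1/\users^2)$ by choice of the Tardos code; condition~(2), $\encqueries(\param) \geq \tracequeries(\param)$, is satisfied because the PRF-based scheme is secure for arbitrarily many polynomially-many queries; condition~(3), $\enctime(\param/2, \tracequeries) \geq (\param + \users + \tracequeries)^a$ for every $a$, is exactly the non-uniform super-polynomial security we have arranged.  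Applying Theorem~\ref{thm:weakenctott} thus yields an $(\users, \tilde{O}(\users^2))$-secure traitor-tracing scheme $\ttscheme$.

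Finally, to bound the complexity of decryption I would invoke Lemma~\ref{lem:formofttdec}: since $\dec$ is a circuit of size $\poly(\param)$ and depth $\poly(\param)$, and $\ttdec$ is obtained by wiring together $\users = \poly(\param)$ copies of $\dec$ with the indicator gates $\mathbf{1}_{i'}$ and one outer $\bigvee$, the resulting $\ttdec$ is a circuit of size $\users \cdot (\poly(\param) + O(\log \users)) = \poly(\param)$, so every $\decfn_\ctext \in \decfns_{\ttdec}^{(\param)}$ has size $\poly(\param)$, as required.  The main thing to watch out for is condition~(3) of Theorem~\ref{thm:weakenctott}: a PRF of merely polynomial hardness does not by itself give security against all fixed polynomial-time adversaries, so the use of \emph{non-uniformly} secure one-way functions (as stated in the corollary) is essential in order to scale $\param$ up and absorb the arbitrary polynomial running time of the pirate.
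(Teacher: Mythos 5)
Your proof plan matches the paper's: fix the Tardos code from Theorem~\ref{thm:shortfpcodes}, build an $(\encsecurity, \encqueries, \enctime)$-encryption scheme from the one-way function whose decryption is a $\poly(\param)$-size circuit, verify the three hypotheses of Theorem~\ref{thm:weakenctott}, and then read off the circuit size for $\decfns_{\ttdec,\param}$ from Lemma~\ref{lem:formofttdec}. The paper simply asserts the existence of a $(1/\param^a, \param^a, \param^a)$-secure scheme with $\poly(\param)$-size decryption for every constant $a$, whereas you spell out the $\mathrm{OWF}\Rightarrow\mathrm{PRG}\Rightarrow\mathrm{PRF}\Rightarrow$ ``PRF counter-mode'' instantiation; that extra detail is harmless and fine.

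One small correction on the closing remark. Your assertion that ``a PRF of merely polynomial hardness does not by itself give security against all fixed polynomial-time adversaries'' is not right as stated: the standard notion of polynomial security already quantifies over all fixed polynomial-time adversaries, so a polynomially secure PRF derived from a (uniformly secure) one-way function is secure against every fixed $\param^a$-time adversary, with distinguishing advantage below every inverse polynomial. That is exactly what condition~(3) and the $(1/\param^a, \param^a, \param^a)$-for-every-$a$ formulation ask for, and it does not require non-uniformity. The reason the corollary posits \emph{non-uniformly} secure one-way functions is not about absorbing the pirate's running time; it is that the reductions in Theorem~\ref{thm:attack} and Theorem~\ref{thm:weakenctott} yield encryption adversaries with nonconstructively chosen advice (e.g.\ the index $i^*$ achieving the $\Omega(1/\users)$ bound, or the codebook/index fixed inside Claim~\ref{clm:distinguisher}), so the encryption scheme must withstand non-uniform attackers. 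This does not affect the correctness of the rest of your argument.
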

\begin{proof}
The existence of one-way functions implies the existence of an encryption scheme $\encscheme$ that is $(1/\param^a, \param^a, \param^a)$-secure for every constant $a > 0$ and sufficiently large $\param$ with decryption function $\decfns_{\dec, \param}$ consisting only of circuits of size $t(\param) = \poly(\param)$ for every $\param \in \N$.  From Lemma~\ref{lem:formofttdec}, it is easy to see that if $\ttscheme$ uses $\encscheme$ as its encryption scheme, then $\decfns_{\ttdec, \param}$ consists only of circuits of size $\tilde{O}(\users(\param) t(\param/2)) = \poly(\param)$.
\end{proof}
Theorem~\ref{thm:main1} in the introduction follows by combining Theorem~\ref{thm:attack} with Corollary~\ref{cor:owftott}.

We will now consider the possibility of constructing a traitor-tracing scheme where the decryption functionality can be implemented by circuits of constant depth, and thus obtaining hardness results for generic sanitizers that are efficient for constant-depth queries (Theorem~\ref{thm:main2}).  First, we summarize our observation that the traitor-tracing scheme almost preserves the depth of the decryption function.
\begin{corollary}[Encryption with Constant-Depth Decryption Impies Traitor-Tracing w/ Constant-Depth Decryption] \label{cor:preservedepth}
Let $\users = \users(\param)$ be any polynomial in $\param$.  If there exists an encryption scheme, $(\gen, \enc, \dec)$, that is $(o(1/\users^2), \omega(\users^4), \users^a)$-secure for every $a > 0$ and has decryption family $\decfns_{\dec}^{(\param)}$ consisting of circuits of size $\poly(\param)$ and depth $h$, then there exists a $(\users, \tilde{O}(\users^2))$-secure traitor-tracing scheme with decryption function family $\decfns_{\ttdec}^{(\param)}$ consisting of circuits of size $\tilde{O}(\users) \cdot \poly(\param)$ and depth $h+2$.
\end{corollary}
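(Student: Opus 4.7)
The plan is to follow exactly the template of Corollary~\ref{cor:owftott}, which combines Theorem~\ref{thm:weakenctott} with Lemma~\ref{lem:formofttdec}. The only new ingredient here is tracking the circuit depth (rather than just size) of the decryption operation, so the heart of the argument is routine parameter-checking once Lemma~\ref{lem:formofttdec} is invoked.

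First, I would instantiate $\ttscheme$ with the hypothesized encryption scheme $\encscheme = (\gen, \enc, \dec)$ together with the optimal-length fingerprinting code from Theorem~\ref{thm:shortfpcodes}, taking $\fpsecurity(\users) = o(1/\users^2)$ and hence $\wordlength(\users) = O(\users^2 \log \users) = \tilde{O}(\users^2)$. Then I would check that the three hypotheses of Theorem~\ref{thm:weakenctott} are satisfied with $\tracequeries(\param) = \wordlength(\users(\param)) = \tilde{O}(\users(\param)^2)$: (i) the combined security bound $\users(\param)\cdot\encsecurity(\param)+\fpsecurity(\users(\param))$ is $o(1/\users(\param)^2)$ by the assumed encryption security and the choice of $\fpscheme$; (ii) $\encqueries(\param) = \omega(\users(\param)^4) \gg \tilde{O}(\users(\param)^2) = \tracequeries(\param)$, so the encryption scheme supports enough messages; and (iii) since $\enctime(\param/2, \tracequeries(\param)) \geq \users(\param)^a$ for every $a > 0$ and $\users(\param) = \poly(\param)$, the encryption scheme is secure against adversaries running in time $\poly(\param, \users(\param), \tracequeries(\param))$, which upper bounds the running time of the pirate decoder. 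Theorem~\ref{thm:weakenctott} then gives that $\ttscheme$ is an $(\users, \tilde{O}(\users^2))$-secure traitor-tracing scheme.

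Second, I would bound the complexity of the decryption function family $\decfns_{\ttdec}^{(\param)}$ using Lemma~\ref{lem:formofttdec}. That lemma expresses $\mathsf{Dec}_{\mathsf{TT}, \ctext}(\userkey)$ as
\[
\bigvee_{i' \in [\users]} \Bigl(\mathbf{1}_{i'}(i) \land \dec_{\ctexti{i'}}(\enckey)\Bigr),
\]
where $\userkey = (\enckey, i)$ and $\ctext = (\ctexti{1}, \dots, \ctexti{\users})$. By hypothesis each $\dec_{\ctexti{i'}}$ is computed by a depth-$h$ circuit of size $\poly(\param)$. Each indicator $\mathbf{1}_{i'} \colon \bits^{\lceil \log \users\rceil} \to \bits$ is a single unbounded-fan-in conjunction (after negating the appropriate input bits), so it adds a single gate in parallel with the depth-$h$ computation of $\dec_{\ctexti{i'}}$. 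Combining these with an $\land$ gate and then an outer $\lor$ gate adds at most $2$ layers of depth and $O(\users)$ additional gates. Hence $\decfns_{\ttdec}^{(\param)}$ consists of circuits of depth $h + 2$ and size $\users \cdot (\poly(\param) + O(\log \users)) = \tilde{O}(\users) \cdot \poly(\param)$, as claimed.

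There is no real obstacle here: the security reduction is already packaged in Theorem~\ref{thm:weakenctott}, and the depth bookkeeping is immediate from the explicit form of $\ttdec$ in Lemma~\ref{lem:formofttdec}. The only point deserving care is confirming that $\users \cdot \encsecurity(\param) = o(1/\users(\param)^2)$ under the assumed encryption security (which one should think of as being chosen to be polynomially stronger than $1/\users^2$, e.g.\ $1/\users^{3+\Omega(1)}$), since that is what Theorem~\ref{thm:weakenctott} literally demands; this is easily arranged by adjusting the security parameter $\param$ slightly (e.g.\ setting the effective $\param$ to a sufficiently large polynomial in the nominal $\param$) without changing the asymptotic form of the conclusion.
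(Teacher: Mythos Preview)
Your proposal is correct and follows exactly the paper's approach: the paper's entire proof is the sentence ``The corollary is clear from Lemma~\ref{lem:formofttdec} and Theorem~\ref{thm:weakenctott},'' and you have faithfully expanded that into an explicit parameter check plus the depth/size bookkeeping. Your closing remark that the literal hypothesis $\encsecurity = o(1/\users^2)$ falls one factor of $\users$ short of condition~(i) in Theorem~\ref{thm:weakenctott} is a correct observation about a slight imprecision in the corollary's statement, not a flaw in your argument.
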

The corollary is clear from Lemma~\ref{lem:formofttdec} and Theorem~\ref{thm:weakenctott}.

The corollary is not interesting without an encryption scheme that can be decrypted by constant-depth circuits.  However, we observe that such a scheme (meeting our relaxed security criteria) can be constructed from a sufficiently good \emph{local pseudorandom generator (PRG)}.  A recent result of Applebaum~\cite{Applebaum12} gave the first plausible candidate construction of a local PRG for the range of parameters we need, giving plausibility to the assumption that such PRGs (and, as we show, traitor-tracing schemes with constant-depth decryption) exist.  We note that local PRGs actually imply encryption schemes with local decryption, which is stronger than just constant-depth decryption.  Although it may be significantly easier to construct encryption schemes that only have constant-depth decryption, we are not aware of any other ways of constructing such a scheme.

\begin{definition}[Local Pseudorandom Generator]
An efficient algorithm \linebreak $\prg \from \bits^\param \to \bits^{\stretch(\param)}$ is a $\prgsecurity$-pseudorandom generator if for every $\poly(\stretch(\param))$-time adversary $\prgadv$
$$
\left| \prob{\prgadv(\prg(U_\param)) = 1}  - \prob{\prgadv(U_{\stretch(\param)}) = 1} \right| \leq \prgsecurity(\param)
$$
If, in addition, if each bit of the output depends only on some set of $L$ bits of the input, then $\prg$ is a $(\prgsecurity, L)$-local pseudorandom generator.
\end{definition}

It is a well known result in Cryptography that pseudorandom generators imply encryption schemes satisfying Definition~\ref{def:encscheme} (for certain ranges of parameters). We will use a particular construction whose decryption can be computed in constant-depth whenever the underlying PRG is locally-computable (or, more generally, computable by constant-depth circuits).
The construction is the standard ``computational one-time pad'', however we give a construction to verify that the decryption can be computed by constant-depth circuits.
\begin{algorithm}
\caption{An encryption scheme $\Pi_{\mathsf{LocalEnc}}$ that can be decrypted in constant depth.}
\begin{algorithmic}
\STATE{$\gen(1^\param)\colon$}
\STATE{Let $s \getsr \bits^\param$ and output $\userkey = s$}
\STATE{}
\STATE{$\enc(\userkey, b)\colon$}
\STATE{Let $r \getsr \set{1,2,\dots,\stretch(\param)}$ and output $\ctext = (r, \prg(\userkey)_{r} \oplus b)$}
\STATE{}
\STATE{$\dec(\userkey, \ctext)\colon$}
\STATE{Let $(r', b') = \ctext$ and output: $b = \prg(\userkey)_{r} \oplus b'$}
\end{algorithmic}
\end{algorithm}
\begin{lemma}[Local PRGs $\rightarrow$ Encryption] \label{lem:lprgtoenc}
If there exists a $(\prgsecurity(\param), L)$-local pseudorandom generator $\prg \from \bits^{\param} \to \bits^{\stretch(\param)}$, then there exists an $(\encsecurity = \prgsecurity + \encqueries^2 / \stretch, \encqueries)$-Secure Encryption Scheme $(\gen, \enc, \dec)$ with decryption function family $\decfns_{\dec, \param}$ consisting of circuits of size $\poly(\param)$ and depth $4$.
\end{lemma}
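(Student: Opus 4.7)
The plan is to verify correctness by direct computation, prove security by a two-step hybrid that first invokes PRG security and then argues information-theoretically using a birthday bound on the indices $r$, and finally observe that locality of $\prg$ makes the decryption circuit of constant depth.

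For correctness, I would compute $\dec(\userkey, \enc(\userkey, b)) = \prg(\userkey)_r \oplus (\prg(\userkey)_r \oplus b) = b$, which holds with probability $1$ over the random choice of $r$, so the scheme is perfectly correct.

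For security, I would introduce a hybrid $H_1$ in which the string $\prg(\userkey)$ is replaced by a uniformly random $R \in \bits^{\stretch(\param)}$, while each ciphertext is still formed as $(r_j, R_{r_j} \oplus b_j)$ with $r_j \getsr [\stretch(\param)]$. A standard reduction shows $H_0$ and $H_1$ are $\prgsecurity(\param)$-indistinguishable for any $\poly(\stretch(\param))$-time adversary: given a PRG challenge $y \in \bits^{\stretch(\param)}$, the reduction samples the $r_j$'s, forms $(r_j, y_{r_j} \oplus b_j)$, and feeds these to $\encadv$. In $H_1$, I would condition on the event $E$ that $r_1, \dots, r_{\encqueries}$ are pairwise distinct; by a union/birthday bound, $\prob{\neg E} \leq \binom{\encqueries}{2}/\stretch \leq \encqueries^2/\stretch$. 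Conditioned on $E$, the bits $R_{r_1}, \dots, R_{r_{\encqueries}}$ are independent and uniform, so each $R_{r_j} \oplus b_j$ is a uniform bit independent of $b_j$, and the joint distribution of the $\encqueries$ ciphertexts is identical for message vector $b$ and for $b'$. Combining the two steps via triangle inequality gives the claimed bound $\encsecurity = \prgsecurity + \encqueries^2/\stretch$.

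For the constant-depth property of decryption, I would fix a ciphertext $\ctext = (r', b')$ and analyze $\dec_{\ctext}(\userkey) = \prg(\userkey)_{r'} \oplus b'$ as a circuit in $\userkey$. Since $\prg$ is $L$-local for a constant $L$, the bit $\prg(\userkey)_{r'}$ depends on a constant-size set $S_{r'} \subseteq [\param]$ of coordinates of $\userkey$, and can therefore be written as a DNF over literals of $(\userkey_i)_{i \in S_{r'}}$ of size at most $2^L = O(1)$. Pushing negations to the inputs, this is a depth-$3$ $\lor/\land$-circuit; the outer $\oplus b'$ is either the identity or a single negation of the output, giving a circuit of depth at most $4$ and size $\poly(\param)$, as required. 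The set $\decfns_{\dec}^{(\param)}$ consists of exactly these circuits, one per ciphertext.

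The only subtle point is ensuring the hybrid parameters line up: the PRG is secure against $\poly(\stretch(\param))$-time adversaries, and the reduction's running time is dominated by simulating the encryption plus running $\encadv$, so as long as $\stretch(\param)$ is a sufficiently large polynomial relative to the adversary's running time the reduction goes through. Beyond that the proof is routine; the main conceptual content is the birthday-bound step that forgives collisions at a cost of $\encqueries^2/\stretch$ in distinguishing advantage.
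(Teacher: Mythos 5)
Your proof is correct, and the correctness and security arguments match the paper's (replace the PRG output with a uniform string via a hybrid, then use a birthday bound on collisions among the $r_j$'s to argue the messages are information-theoretically hidden). Two small remarks on the security step: strictly speaking, you need \emph{two} PRG hops, one from $H_0^b$ to $H_1^b$ and one from $H_1^{b'}$ to $H_0^{b'}$, so the honest bound is $2\prgsecurity + \encqueries^2/\stretch$; the paper's own statement also elides this constant factor, and it is irrelevant for the downstream instantiation (which only needs $\encsecurity = o(1/\users^2)$), but it is worth being aware of.

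Where you genuinely depart from the paper is the depth analysis of the decryption circuits, and your version is cleaner and tighter. The paper writes
$\dec_{(r,b)}(s) = \bigvee_{i \in [\stretch(\param)]} \bigl( \mathbf{1}_{i}(r) \land (\prg_i(s) \oplus b) \bigr)$
and then counts the indicator $\mathbf{1}_i(r)$ as a depth-$1$ conjunction over $\lceil\log_2 \stretch\rceil$ bits, arriving at depth $4$ and size $\tilde{O}(2^L\stretch(\param))$. But in the decryption function family, $\ctext = (r,b)$ is \emph{hard-wired} into $\decfn_\ctext$, so each $\mathbf{1}_i(r)$ is a constant and the disjunction collapses to the single term $\prg_r(s)\oplus b$. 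You observe exactly this: $\decfn_{(r',b')}(\userkey) = \prg_{r'}(\userkey)\oplus b'$ depends on only $L$ coordinates of the key, so it is a size-$2^L$ DNF (with a possible top-level negation), giving depth at most $4$ and size $O(2^L) = O(1)$, a strictly smaller circuit than the paper's $\tilde{O}(2^L\stretch)$. Both arguments certify depth at most $4$, which is all that Corollary~\ref{cor:preservedepth} uses, so the lemma is proved either way; but your route makes explicit that the index-selection gadget belongs in Lemma~\ref{lem:formofttdec} (where the user index $i$ really is an input bit of the key) and not here. One small check you leave implicit but should note, since the circuits are handed to the sanitizer by the pirate in Theorem~\ref{thm:attack}: given $\ctext = (r',b')$ one can construct your circuit in time $\poly(\param)$ by running $\prg$ to determine the $L$ relevant input coordinates and then tabulating the $2^L$-entry truth table.
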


\begin{proof}
The security follows from standard arguments:  If we choose a random $s \getsr \bits^{\param}$, then $\prg(s)$ is indistinguishable from uniform up to error $\prgsecurity$.  If we generate $\encqueries$ encryptions with key $s$, and no two encryptions use the same choice of $r$, then the output is indistinguishable from encryptions using uniform random bits in place of $\prg(s)$. If we use uniform random bits in place of $\prg$, then the message is information-theoretically hidden.  The probability that no two encryptions out of $\encqueries$ use the same choice of $r$ is at most $\encqueries^2/\stretch$, so we lose this term in the security of the encryption scheme.

Let $\mathbf{1}_{i}(j)$ be the indicator variable for the condition $j=i$.  For every $\ctext = (r, b) \in \ctexts$, we can write
\begin{equation*}
\dec_{(r,b)}(s) = \bigvee_{i \in [\stretch(\param)]} \left( \mathbf{1}_{i}(r) \land \left(\prg_i(s) \oplus b\right) \right).
\end{equation*}
Observe that, since $\prg_i$ is a function of $L$ bits of the input, it can be computed by a size-$2^L$ DNF (depth-$2$ circuit), thus $\prg_i(s) \oplus b$ can be computed by a size $2^L + 1$, depth-$3$ circuit.  The indicator $\mathbf{1}_{i}$ can be computed by a conjunction of $ \lceil \log_2 \stretch(\param) \rceil$ bits, which is a size-$\lceil\log_2 \stretch(\param) \rceil$, depth-$1$ circuit.  The outer disjunction increases the depth by one level and the size by $1$.  Putting it all together, we have shown that $\dec_{r,b}(s)$ can be computed by depth-$4$ circuits of size $\tilde{O}(2^{L} \stretch(\param)) = \poly(\stretch(\param))$.
\end{proof}

Combining Corollary~\ref{cor:preservedepth} with Lemma~\ref{lem:lprgtoenc} easily yields the following corollary.
\begin{corollary} [Local Pseudorandom Generators Imply traitor-tracing w/ $AC^0$ Decryption] \label{cor:lenctott}
Let $\users = \users(\param)$ be any polynomial in $\param$.  Assuming the existence of a $(o(1/\users^2), \users^7, L)$-local pseudorandom generator for some constant $L \in \N$, there exists an $(\users, \tilde{O}(\users^2))$-secure traitor-tracing scheme with decryption function family $\decfns_{\ttdec, \param}$ consisting of circuits of size $\tilde{O}(\users)\cdot\poly(\param)$ and depth $6$.
\end{corollary}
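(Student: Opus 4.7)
The plan is to directly compose Lemma~\ref{lem:lprgtoenc} with Corollary~\ref{cor:preservedepth}; the work is almost entirely bookkeeping of the parameters. Starting from the assumed $(o(1/\users^2), L)$-local pseudorandom generator $\prg \from \bits^\param \to \bits^{\users^7}$, I would first invoke Lemma~\ref{lem:lprgtoenc} to produce an encryption scheme $\encscheme = (\gen, \enc, \dec)$ whose decryption family $\decfns_{\dec}^{(\param)}$ consists of depth-$4$ circuits of size $\poly(\param)$. The security bound coming out of that lemma is $\encsecurity = \prgsecurity + \encqueries^2/\stretch$.

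Next I would check that the hypotheses of Corollary~\ref{cor:preservedepth} are met by this encryption scheme. With $\prgsecurity = o(1/\users^2)$ and stretch $\stretch = \users^7$, the encryption scheme is secure against $\encqueries$ queries for any $\encqueries$ such that $\encqueries^2/\users^7 = o(1/\users^2)$, which comfortably includes any polynomially bounded $\encqueries$ up to roughly $\users^{2.5}$ (in particular the $\tilde{O}(\users^2)$ queries the tracing algorithm makes). The running time requirement is also satisfied since security against $\poly(\stretch)$-time adversaries is built into the definition of the PRG.

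Applying Corollary~\ref{cor:preservedepth} then yields an $(\users, \tilde{O}(\users^2))$-secure traitor-tracing scheme $\ttscheme$, and Lemma~\ref{lem:formofttdec} tells us the decryption function family $\decfns_{\ttdec}^{(\param)}$ consists of circuits obtained from the $\dec$ circuits by adding one layer of fan-in-$O(\log \users)$ conjunctions (for the indicator $\mathbf{1}_{i'}$ combined with $\dec_{\ctexti{i'}}$) and a single outer disjunction of fan-in $\users$. Since $\dec$ has depth $4$ and size $\poly(\param)$, the resulting $\ttdec$ circuits have depth $4+2 = 6$ and size $\tilde{O}(\users) \cdot \poly(\param)$, matching the claim.

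Because both Lemma~\ref{lem:lprgtoenc} and Corollary~\ref{cor:preservedepth} are already stated and proved, there is essentially no mathematical obstacle in this argument; the only thing to get right is the parameter accounting, and in particular confirming that the PRG's stretch $\users^7$ is large enough to absorb the $\encqueries^2$ security loss in Lemma~\ref{lem:lprgtoenc} at the required query count $\tracequeries = \wordlength(\users) = \tilde{O}(\users^2)$ dictated by the length of the Tardos fingerprinting code in Theorem~\ref{thm:shortfpcodes}.
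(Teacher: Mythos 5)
Your proposal takes exactly the route the paper takes: the paper's entire proof of this corollary is the single sentence ``Combining Corollary~\ref{cor:preservedepth} with Lemma~\ref{lem:lprgtoenc} easily yields the following corollary,'' and you have simply filled in the parameter bookkeeping. One wrinkle worth flagging: Corollary~\ref{cor:preservedepth} as literally stated hypothesizes an encryption scheme that is $\omega(\users^4)$-query secure at security level $o(1/\users^2)$, but a $\users^7$-stretch PRG only yields (via Lemma~\ref{lem:lprgtoenc}) security up to roughly $\users^{2.5}$ queries at that error level, as your own calculation shows; your argument nevertheless goes through because the quantity that actually matters in Theorem~\ref{thm:weakenctott} is $\tracequeries = \wordlength(\users) = \tilde{O}(\users^2)$ from the Tardos code, so this is a looseness in the paper's statement of Corollary~\ref{cor:preservedepth} rather than a flaw in your reasoning, but a careful write-up should either note that the hypothesis of Corollary~\ref{cor:preservedepth} is overly strong or appeal to Theorem~\ref{thm:weakenctott} directly.
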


Theorem~\ref{thm:main2} in the introduction follows by combining Theorem~\ref{thm:attack} with Corollary~\ref{cor:lenctott}.

\section*{Acknowledgements}
We thank Cynthia Dwork for suggesting that we look further at the connection between traitor-tracing and differential privacy.  We thank Salil Vadhan for helpful discussions about the connection between traitor-tracing and differential privacy, and about the presentation of this work.  We also thank Dan Boneh, Moritz Hardt, Hart Montgomery, Ananth Raghunathan, Aaron Roth, Guy Rothblum, and Thomas Steinke for helpful discussions.

\ifnum\sicompversion=1
	\newcommand{\etalchar}[1]{$^{#1}$}
	
\else
	\bibliographystyle{alpha}
	\bibliography{./privacyrefs}
\fi

\appendix

\section{Additional Related Work} \label{sec:conc}

In this appendix, we elaborate on the relationship between sanitizers, interactive sanitizers, and one-shot sanitizers, and on the relationship between our results and prior work on the complexity of differentially private sanitization.

\paragraph{The Relationship with~\cite{DworkNaVa12}}
Dwork, Naor, and Vadhan~\cite{DworkNaVa12} gave information theoretic lower bounds for \emph{stateless sanitizers}.  These are sanitizers that take $k$ queries as input, but whose answers to each query do not depend on the other $k-1$ input queries.    

Another interpretation of our results, which can be used to give an alternative proof of our results, is that we construct a family of queries for which ``keeping state doesn't help''.

They consider a game where an $n$-row database is chosen at random, and a random subset of $n-1$ of those rows is given to the attacker.  The attacker wants to violate privacy by recovering the $n$-th row.  To do so, the attacker chooses $\sim n^2$ queries (randomly, from a distribution that depends on the $n-1$ known rows) and requests answers to these queries.  Using these answers, they show that there is a particular way for the attacker to (randomly) guess the missing row, that will succeed with sufficient probability to constitute a privacy violation.  Their argument is in two steps: 1) The expected correlation between the answers given by a stateless sanitizer and the value of the queries on the missing row is significant.  2) A stateless sanitizer cannot give answers that are correlated with too many rows that are not in the database.  Combining these two steps shows that the attacker has a significant chance of identifying the $n$-th row from its correlation with the answers.

Typically, the intuition behind the analysis of traitor-tracing schemes follows roughly the same two steps:  1) There will be some correlation between the decryptions returned by the efficient pirate and the decryptions that would be returned by some member of the coalition (using only his own key).  2) There will not be significant correlation between the decryptions returned by the efficient pirate and the decryptions that would be returned by any user not a member of the coalition.  This is exactly the intuition we sketched in the simpler (sub-optimal) construction.  In the final construction, much of this argument is made ``inside'' the construction of the fingerprinting code.  If we ``unrolled'' the analysis of the fingerprinting code directly into our construction, we would make exactly the same arguments.  

\paragraph{Other Types of Sanitizers}
There are two other variants of sanitizers that have appeared in the literature.  The first, which we have already discussed, is a one-shot sanitizer.  The second, is an \emph{interactive sanitizer}.  This type of sanitizer is the same as the one we consider in this work, where the sanitizer is given a database $D$ and $k$ queries from a family $\cQ$, but the queries arrive one at a time, and may be chosen adaptively.  In this setting, we want the sanitizer to answer each query efficiently (in time polynomial in $d$, $n$, and $k$).  The Laplace Mechanism is, in fact, interactive, and a line of work initiated by Roth and Roughgarden~\cite{RothRo10, HardtRo10, GuptaRoUl12} showed how to interactively answer $2^{\tilde{O}(n)}$ queries in time $\poly(2^d, n)$ per query.

The three variants we've described satisfy some interesting relationships.  First, if we have an algorithm that runs in time $T$ and releases a summary that enables an analyst to answer any query in $\cQ$ in time $T$, then we also have an interactive sanitizer that runs in time $2T$ per query that answers any sequence of $k$ queries from $\cQ$.  Secondly, if we have an interactive sanitizer that answers up to $K$ queries from $\cQ$ in time $T$ per query, then we also have a non-interactive sanitizer that answers any $k \leq K$ queries from $\cQ$ in time $Tk$.  Thus, holding $\cQ$ fixed and assuming $k \gg n^2$, the problem we consider is the easiest form of private counting query release, and the lower bounds we prove imply lower bounds for the other variants.  

For the case of interactive sanitization, these lower bounds are new.  To our knowledge, prior to our work it was possible that there was an interactive sanitizer that ran in time $\poly(d,n)$ per query and answered nearly $2^n$ arbitrary counting queries, whereas our results show that there is no efficient interactive sanitizer for significantly more than $n^2$ queries.  On the other hand, for counting query release, our results only show that it is hard to release a particular family of queries $\cQ$ whose size is at least $2^n$.  For families of queries this large, the results of Dinur and Nissim~\cite{DinurNi03} already imply the impossibility of release, even by computationally unbounded algorithms.

Indeed, in the data release problem one cannot take $\cQ$ to include all efficiently computable counting queries.  Sanitizers (both interactive and non-interactive) are supposed to circumvent this problem by allowing the queries to be arbitrary, but only answering the $k$ queries that are needed.  However our results show that they can only circumvent the problem if we allow superpolynomial computation or we take $k \lesssim n^2$.

\end{document}